\newcommand{\NP}{\mathcal{NP}}
\newcommand{\eat}[1] {{}}
\newcommand{\quaa}{$\mathcal{Q}_1$}
\newcommand{\quab}{$\mathcal{Q}_2$}
\newcommand{\quai}{$\mathcal{Q}_i$}
\newcommand{\quaaa}{$\mathcal{Q}_1^1$}
\newcommand{\quaab}{$\mathcal{Q}_1^2$}
\newcommand{\qua}{$\mathcal{Q}$}
\begin{document}

\title{The Straight-Line RAC Drawing Problem is NP-Hard}

%%% Author list of contribution
\author{Evmorfia N.\ Argyriou \inst{1}, Michael A.\ Bekos \inst{1}, and Antonios
Symvonis \inst{1}}

%%% Author list for running head
\authorrunning{E.N. Argyriou, M.A. Bekos, A. Symvonis}

%%%% List of authors for the TOC
\tocauthor{Evmorfia N. Argyriou, Michael A. Bekos, Antonios Symvonis}

%%% Abbreviated author list for running head in case of space limitations
\authorrunning{E.N. Argyriou, M.A. Bekos, A. Symvonis}

\institute{%
    School of Applied Mathematical \& Physical Sciences,\\
    National Technical University of Athens, Greece\\
    \email{$\{$fargyriou,mikebekos,symvonis$\}$@math.ntua.gr}
}

\maketitle              % typeset the title of the contribution

\begin{abstract}
Recent cognitive experiments have shown that the negative impact of
an edge crossing on the human understanding of a graph drawing,
tends to be eliminated in the case where the crossing angles are
greater than $70$ degrees. This motivated the study of \emph{RAC
drawings}, in which every pair of crossing edges intersects at right
angle. In this work, we demonstrate a class of graphs with unique
RAC combinatorial embedding and we employ members of this class in
order to show that it is $\NP$-hard to decide whether a graph admits
a straight-line RAC drawing.\\~\\Date: \emph{September 27, 2010.}
\end{abstract}

% ============================================================================
\section{Introduction}
\label{sec:introduction}
% ============================================================================

In the graph drawing literature, the problem of finding
aesthetically pleasant drawings of graphs has been extensively
studied. The graph drawing community has introduced and studied
several criteria that judge the quality of a graph drawing, such as
the number of crossings among pairs of edges, the number of edge
bends, the maximum edge length, the total area occupied by the
drawing and so on (see the books~\cite{DBTT94,KW01}).

Motivated by the fact that the edge crossings have negative impact
on the human understanding of a graph drawing
\cite{P00,PCA02,CPCM02}, a great amount of research effort has been
devoted on the problem of finding drawings with minimum number of
edge crossings. Unfortunately, this problem is $\NP$-complete in
general \cite{GJ83}. However, recent eye-tracking experiments by
Huang et al.\ \cite{Hu07,HHE08} indicate that the negative impact of
an edge crossing is eliminated in the case where the crossing angle
is greater than $70$ degrees. These results motivated the study of a
new class of drawings, called \emph{right-angle drawings} or
\emph{RAC drawings} for short \cite{ACBDFKS09,DGDLM10,DEL09,DEL10}.
A RAC drawing of a graph is a polyline drawing in which every pair
of crossing edges intersects at right angle.

Didimo, Eades and Liota \cite{DEL09} proved that it is always
feasible to construct a RAC drawing of a given graph with at most
three bends per edge. In this work, we prove that the problem of
determining whether an input graph admits a straight-line RAC
drawing is $\NP$-hard.

% ============================================================================
\subsection{Related Work}
\label{sec:related-work}
% ============================================================================

Didimo et al.\ \cite{DEL09} initiated the study of RAC drawings and
showed that any straight-line RAC drawing with $n$ vertices has at
most $4n-10$ edges and that any graph admits a RAC drawing with at
most three bends per edge. A slightly weaker bound on the number of
edges of an $n$-vertices RAC drawing was given by Arikushi et al.
\cite{AFKMT10}, who proved that any straight-line RAC drawing with
$n$ vertices may have $4n-8$ edges. Angelini et al.\
\cite{ACBDFKS09} showed that the problem of determining whether an
acyclic planar digraph admits a straight-line upward RAC drawing is
$\NP$-hard. Furthermore, they constructed digraphs admitting
straight-line upward RAC drawings, that require exponential area. Di
Giacomo et al.\ \cite{DGDLM10} studied the interplay between the
crossing resolution, the maximum number of bends per edges and the
required area. Didimo et al.\ \cite{DEL10} presented a
characterization of complete bipartite graphs that admit a
straight-line RAC drawing. Arikushi et al.\ \cite{AFKMT10} studied
polyline RAC drawings in which each edge has at most one or two
bends and proved that the number of edges is at most $O(n)$ and
$O(n\log^2{n})$, respectively. Dujmovic et al.\ \cite{DGMW10}
studied \emph{$\alpha$~Angle Crossing} (or \emph{$\alpha$AC} for
short) drawings, i.e., drawings in which the smallest angle formed
by an edge crossing is at least $\alpha$. In their work, they
presented upper and lower bounds on the number of edges. Van Kreveld
\cite{vK10} studied how much better (in terms of area required,
edge-length and angular resolution) a RAC drawing of a planar graph
can be than any planar drawing of the same graph.

Closely related to the RAC drawing problem, is the angular
resolution maximization problem, i.e., the problem of maximizing the
smallest angle formed by any two adjacent edges incident to a common
vertex. Note that both problems correlate the resolution of a graph
with the visual distinctiveness of the edges in a graph drawing.
Formann et al.\ \cite{FHHKLSWW93} introduced the notion of the
angular resolution of straight-line drawings. In their work, they
proved that determining whether a graph of maximum degree $d$ admits
a drawing of angular resolution $\frac{2 \pi}{d}$ (i.e., the obvious
upper bound) is $\NP$-hard. They also presented upper and lower
bounds on the angular resolution for several types of graphs of
maximum degree $d$. Malitz and Papakostas \cite{MP92} proved that
for any planar graph of maximum degree $d$, it is possible to
construct a planar straight-line drawing with angular resolution
$\Omega(\frac{1}{7^d})$.
%Kant \cite{Ka92}, \cite{Ka93} showed that
%testing whether a biconnected planar graph admits a straight-line
%drawing with angular resolution greater than or equal to a given
%constant is $\NP$-hard. Furthermore, he showed that a triconnected
%planar graph of maximum degree $d$ admits a planar polyline grid
%drawing with angular resolution $\Omega(\frac{1}{d})$.
Garg and Tamassia \cite{GT94} presented a continuous tradeoff
between the area and the angular resolution of planar straight-line
drawings. For the case of connected planar graphs with $n$ vertices
and maximum degree $d$, Gutwenger and Mutzel \cite{GM98} presented a
linear time algorithm that constructs planar polyline grid drawings
on a $(2n-5)\times(\frac{3}{2}n-\frac{7}{2})$ grid with at most
$5n-15$ bends and minimum angle greater than $\frac{2}{d}$.
Bodlaender and Tel \cite{BT04} showed that planar graphs with
angular resolution at least $\frac{\pi}{2}$ are rectilinear.
Recently, Lin and Yen \cite{LY05} presented a force-directed
algorithm based on edge-edge repulsion that constructs drawings with
high angular resolution. Argyriou et al.\ \cite{ABS10} studied a
generalization of the crossing and angular resolution maximization
problems, in which the minimum of these quantities is maximized and
presented optimal algorithms for complete graphs and a
force-directed algorithm for general graphs.

The rest of this paper is structured as follows: In
Section~\ref{sec:preliminaries}, we introduce preliminary properties
and notation. In Section~\ref{sec:racgraphs}, we present a class of
graphs with unique RAC combinatorial embedding. In
Section~\ref{sec:np}, we show that the straight-line RAC drawing
problem is $\NP$-hard. We conclude in Section~\ref{sec:conclusions}
with open problems.

% ============================================================================
\section{Preliminaries}
\label{sec:preliminaries}
% ============================================================================

Let $G=(V,E)$ be a simple, undirected graph drawn in the plane. We
denote by $\Gamma(G)$ the drawing of $G$. Given a drawing
$\Gamma(G)$ of a graph $G$, we denote by $\ell_{u,v}$ the line
passing through vertices $u$ and $v$. By $\ell_{u,v}'$, we refer to
the semi-line that emanates from vertex $u$, towards vertex $v$.
Similarly, we denote by $\ell_{u,v,w}$ ($\ell_{u,v,w}'$) the line
(semi-line) that coincides (emanates from) vertex $u$ and is
perpendicular to edge $(v,w)$. The following properties are used in
the rest of this paper.

\begin{property}[Didimo, Eades and Liota \cite{DEL09}]
In a straight-line RAC drawing there cannot be three mutually
crossing edges. \label{prp:three-crossing-edges}
\end{property}

\begin{property}[Didimo, Eades and Liota \cite{DEL09}]
In a straight-line RAC drawing there cannot be a triangle
$\mathcal{T}$ and two edges $(a,b)$ and $(a,b')$, such that $a$ lies
outside $\mathcal{T}$ and $b$, $b'$ lie inside $\mathcal{T}$.
\label{prp:triangle-edges}
\end{property}

%============================================================================
\section{A Class of Graphs with Unique RAC Combinatorial Embedding}
\label{sec:racgraphs}
%============================================================================

The $\NP$-hardness proof employs a reduction from the well-known
$3$-SAT problem \cite{GJ79}. However, before we proceed with the
reduction details, we first provide a graph, referred to as
\emph{augmented square antiprism graph}, which has the following
property: All RAC drawings of this graph have two ``symmetric''
combinatorial embeddings. Figures~\ref{fig:basic-gadget-2} and
\ref{fig:basic-gadget-2_2} illustrate this property. Observe that the augmented
square antiprism graph consists of a ``central'' vertex $v_0$, which
is incident to all vertices of the graph, and two quadrilaterals
(refer to the dashed and bold drawn squares in
Figure~\ref{fig:basic-gadget-2_2}), that are denoted by \quaa~and
\quab~in the remainder of this paper. Removing the central vertex,
the remaining graph corresponds to the skeleton of a square
antiprism, and, it is commonly referred to as \emph{square antiprism
graph}.

\begin{figure}[h!tb]
  \centering
  \begin{minipage}[b]{.32\textwidth}
     \centering
     \subfloat[\label{fig:basic-gadget-2}{}]
     {\includegraphics[width=\textwidth]{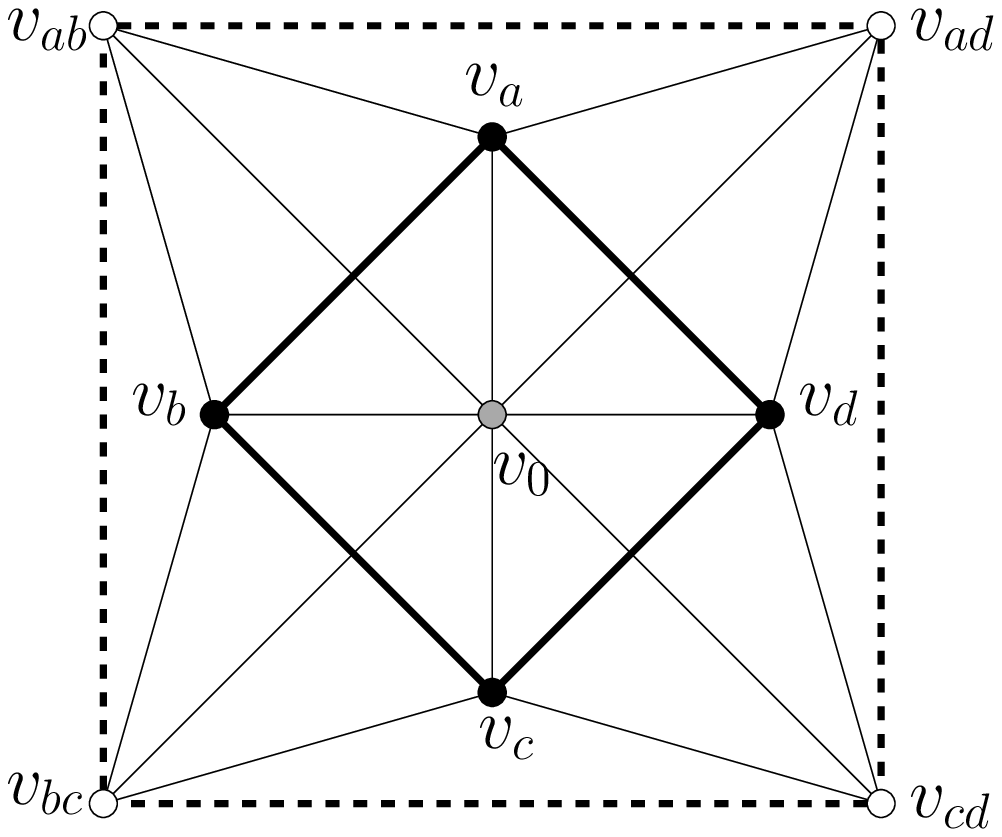}}
  \end{minipage}
  \hfill
  \begin{minipage}[b]{.32\textwidth}
     \centering
     \subfloat[\label{fig:basic-gadget-2_2}{}]
     {\includegraphics[width=\textwidth]{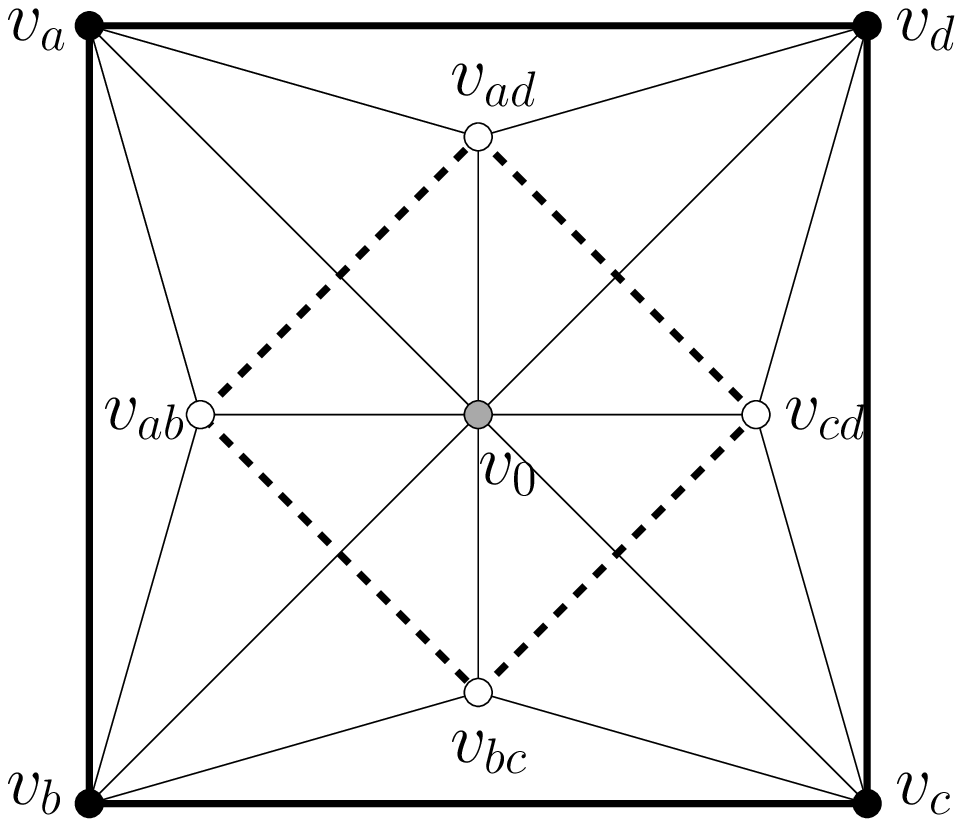}}
  \end{minipage}
  \begin{minipage}[b]{.32\textwidth}
     \centering
     \subfloat[\label{fig:basic-gadget-3}{}]
     {\includegraphics[width=\textwidth]{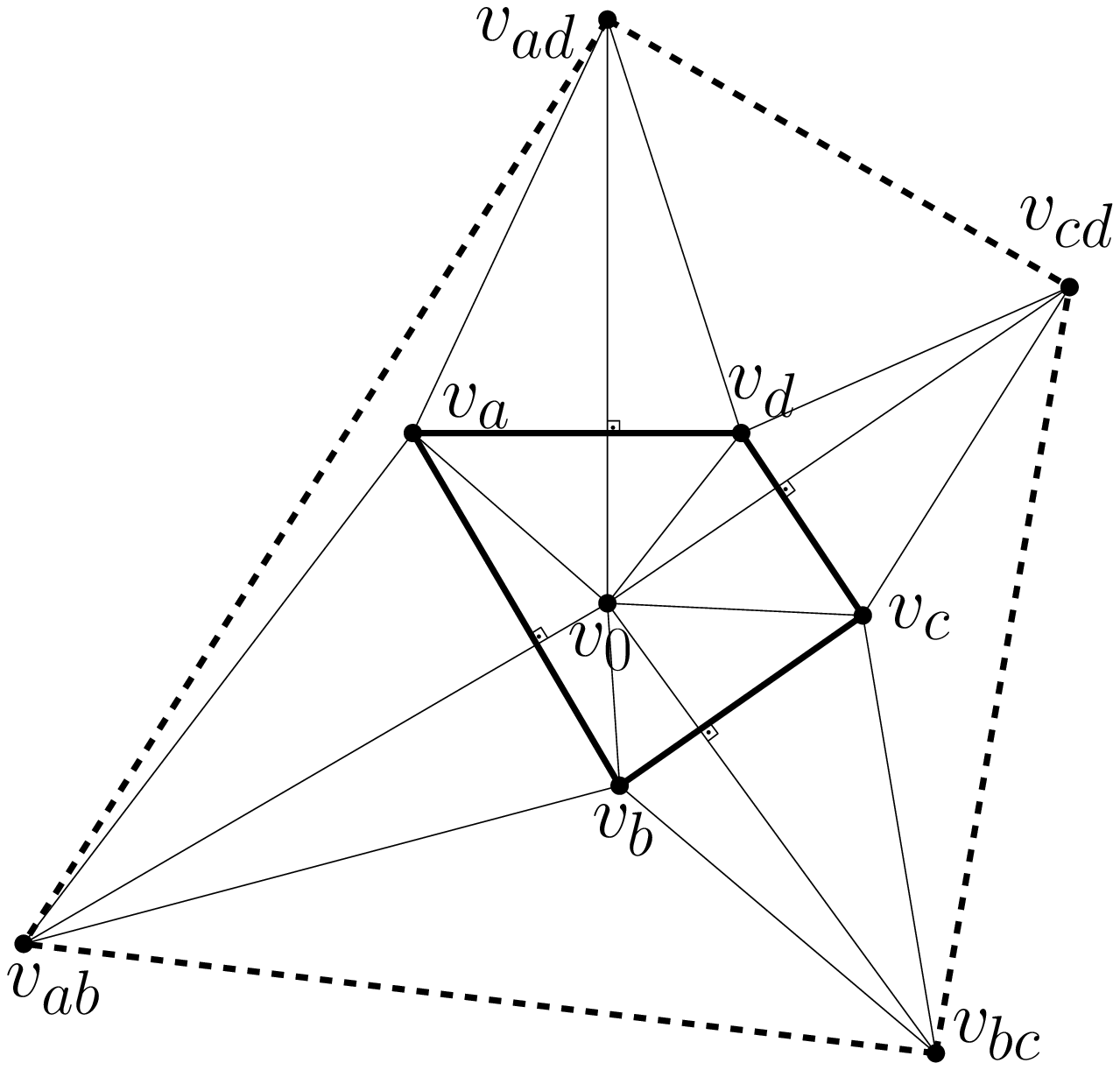}}
  \end{minipage}
  \caption{(a)-(b)~Two different RAC drawings of the augmented square antiprism graph with different combinatorial embeddings.
  (a)-(c) Two different RAC drawing with the same combinatorial embedding.}
  \label{fig:basic-gadget-1}
\end{figure}

If we replace the two quadrilaterals with two triangles, then the
implied graph is the \emph{augmented triangular antiprism graph}.
Didimo et al.\ \cite{DEL09}, who showed that any $n$-vertex graph
which admits a RAC-drawing can have at most $4n-10$ edges, used the
augmented triangular antiprism graph, as an example of a graph that
achieves the bound of $4n-10$ edges (see Figure~1.c in
\cite{DEL09}). In contrast to the augmented triangular antiprism
graph, the augmented square antiprism graph does not achieve this
upper bound. In general, the class of \emph{the augmented $k$-gon
antiprism graphs, $k \geq 3$}, is a class of non-planar graphs, that
all admit RAC drawings. Recall that any planar $n$-vertices graph,
should have $3n-6$ edges, and since an augmented $k$-gon antiprism
graph has $2k+1$ vertices and $5k$ edges, it is not planar for the
entire class of these graphs.

%============================================================================
% Proofs of lemmas
%============================================================================

\begin{lemma}
There does not exist a RAC drawing of the augmented square antiprism
graph in which the central vertex $v_0$ lies on the exterior of
quadrilateral \quai, $i=1,2$, and an edge connecting $v_0$ with a
vertex of \quai~crosses an edge of \quai. \label{lem:v0nocross}
\end{lemma}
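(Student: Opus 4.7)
I argue by contradiction. Assume that $v_0$ lies in the exterior of $\mathcal{Q}_1$ (the case $i = 2$ is symmetric); write $a_1, a_2, a_3, a_4$ for the vertices of $\mathcal{Q}_1$ in cyclic order, and suppose that some spoke $(v_0, a_j)$ crosses an edge of $\mathcal{Q}_1$. The crossed edge cannot share an endpoint with $(v_0, a_j)$, and by the symmetry of the quadrilateral we may assume that $(v_0, a_1)$ crosses $(a_2, a_3)$.

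The right-angle crossing condition pins down the local geometry: after placing $v_0$ at the origin with $a_1$ on the positive $x$-axis, the edge $(a_2, a_3)$ must lie on a vertical line $x = \mu$ with $\mu$ strictly between the abscissas of $v_0$ and $a_1$, and with $a_2$ and $a_3$ on opposite sides of the $x$-axis. Consequently the triangle $T = v_0 a_2 a_3$, which is present in the augmented antiprism as a subtriangle, has its interior confined to the strip $0 < x < \mu$; hence $a_1$ lies strictly outside $T$, while the initial segment of the spoke $(v_0, a_1)$ lies strictly inside $T$.

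I then perform a case analysis on the position of the fourth vertex $a_4$ with respect to the line $\ell_{a_2, a_3}$, combining Properties~\ref{prp:three-crossing-edges} and \ref{prp:triangle-edges} with the antiprism edges $(a_i, b_j)$ and the spokes $(v_0, b_j)$. If $a_4$ lies on the side of $\ell_{a_2, a_3}$ opposite to $a_1$, the edge $(a_4, a_1)$ must also cross $(a_2, a_3)$, and by RAC it does so perpendicularly; this forces $(a_4, a_1)$ to be parallel to $(v_0, a_1)$ and, together with the triangle $v_0 a_1 a_4$, produces either three mutually crossing edges (violating Property~\ref{prp:three-crossing-edges}) or an external vertex whose two neighbours lie inside a triangle (violating Property~\ref{prp:triangle-edges}). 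If $a_4$ lies on the same side of $\ell_{a_2, a_3}$ as $a_1$, then one of the vertices of $\mathcal{Q}_1$ becomes contained in some triangle $v_0 a_i a_{i+1}$, and its two antiprism neighbours in $\mathcal{Q}_2$ cannot be placed consistently: putting both outside the trapping triangle forces the corresponding antiprism edges to cross two of its sides, contradicting Property~\ref{prp:three-crossing-edges}, whereas placing at least one inside triggers Property~\ref{prp:triangle-edges} with that $b_k$ as the external vertex.

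The main obstacle, as I see it, is this last case, because the wheel $v_0 \cup \mathcal{Q}_1$ in isolation already admits a straight-line RAC drawing of the assumed form. The contradiction only materialises once the antiprism edges between $\mathcal{Q}_1$ and $\mathcal{Q}_2$ and the common neighbour $v_0$ of all eight outer vertices are brought into play; identifying the correct trapping triangle, the antiprism vertex that plays the role of the external point in Property~\ref{prp:triangle-edges}, and controlling the cyclic order of the eight spokes around $v_0$ is the delicate part of the argument.
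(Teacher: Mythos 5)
There is a genuine gap, and you have in fact flagged it yourself: your closing paragraph concedes that the contradiction ``only materialises once the antiprism edges between $\mathcal{Q}_1$ and $\mathcal{Q}_2$ \ldots are brought into play'' and that this is ``the delicate part of the argument'' --- but that delicate part is precisely what is never carried out. Your case analysis is organised around the position of $a_4$, a vertex of $\mathcal{Q}_1$ itself, and both branches are incomplete. In the branch where $a_4$ lies on the far side of $\ell_{a_2,a_3}$, the claim that $(a_4,a_1)$ ``must also cross $(a_2,a_3)$'' does not follow: the segment $(a_4,a_1)$ must cross the \emph{line} supporting $(a_2,a_3)$, but it can do so above $a_2$ or below $a_3$, in which case no crossing with the edge occurs and the asserted parallelism/overlap never arises. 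In the branch where $a_4$ lies on the near side, the assertion that ``one of the vertices of $\mathcal{Q}_1$ becomes contained in some triangle $v_0 a_i a_{i+1}$'' is stated without justification, and since (as you yourself note) the wheel $v_0 \cup \mathcal{Q}_1$ alone admits a RAC drawing of the assumed form, no contradiction confined to $\mathcal{Q}_1$ and $v_0$ can possibly close either branch.

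The paper's proof pivots immediately to the object your sketch postpones: the common neighbour $v_{bc}\in\mathcal{Q}_2$ of the two endpoints $v_b,v_c$ of the crossed edge, which is also adjacent to $v_0$. Property~\ref{prp:three-crossing-edges} then does the heavy lifting at the outset: since $(v_0,v_a)$ and $(v_b,v_c)$ already form a right-angle crossing, no edge incident to $v_{bc}$ may cross either of them, which confines $v_{bc}$ to three small regions adjacent to the crossing. In each region a short follow-up argument (on $v_d$, on the common neighbour $v_{cd}$, or on a counting of the five neighbours of $v_a$) shows some required adjacency to $v_0$ cannot be realised. If you want to salvage your approach, replace the case split on $a_4$ by a case split on where $v_{bc}$ can go, using Property~\ref{prp:three-crossing-edges} to restrict its placement first; as it stands, the proposal identifies the right local geometry of the perpendicular crossing but does not contain a proof.
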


\begin{proof}
Let \qua~be one of quadrilaterals \quai, $i=1,2$ and let $v_a$,
$v_b$, $v_c$ and $v_d$ be its vertices, consecutive along
quadrilateral \qua. Assume to the contrary that vertex $v_0$ lies on
the exterior of quadrilateral \qua~and there exists an edge, say
$(v_0,v_a)$, that emanates from vertex $v_0$ towards a vertex of
quadrilateral \qua, such that it crosses an edge, say
$(v_b,v_c)$\footnote{The case where, it crosses edge $(v_c,v_d)$ is
symmetric.}, of quadrilateral \qua. Vertices $v_b$ and $v_c$ have
the following properties: (a)~they are both connected to vertex
$v_0$, and, (b)~have a common neighbor $v_{bc}$, which is incident
to vertex $v_0$ and $v_{bc} \notin\ \mathcal{Q}$~(see
Figure~\ref{fig:basic-gadget-1}).

Observe that if vertex $v_{bc}$ lies in the non-colored regions of
Figure~\ref{fig:v0nocross-regions}, then at least one of the edges
incident to $v_{bc}$ crosses either $(v_0,v_a)$ or $(v_b,v_c)$,
which are already involved in a right-angle crossing. This leads to
a situation where three edges mutually cross, which, by
Property~\ref{prp:three-crossing-edges} is not permitted. Hence,
vertex $v_{bc}$ should lie in the interior of the dark-gray colored
regions $R_1$, $R_2$ or $R_3$ of Figure~\ref{fig:v0nocross-regions}.
We consider each of these cases separately in the following. Note
that, there exist cases where $R_2 \cup R_3= \emptyset$ (i.e.,
vertex $v_0$ is close to the intersection point of $(v_0,v_a)$ and
$(v_b,v_c)$), or $R_2 = \emptyset$ (i.e., vertex $v_c$ is close to
the intersection point of $(v_0,v_a)$ and $(v_b,v_c)$), or $R_3 =
\emptyset$ (i.e., vertex $v_b$ is close to the intersection point of
$(v_0,v_a)$ and $(v_b,v_c)$).

\begin{figure}[htb]
    \centering
    \includegraphics[width=.5\textwidth]{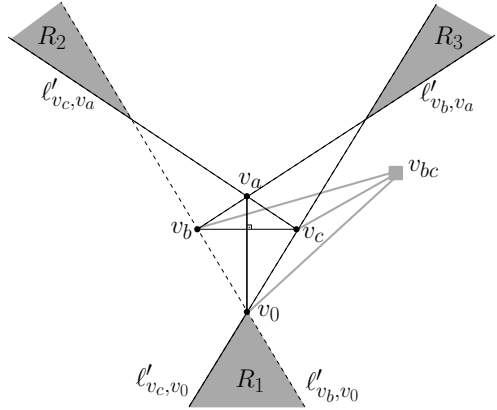}
    \caption{Vertex $v_{bc}$ should lie in the interior of $R_1$ or $R_2$ or $R_3$.}
    \label{fig:v0nocross-regions}
\end{figure}

\begin{description}

\item [Case i:] \emph{Vertex $v_{bc}$ is in the interior of $R_1$}. This case is
depicted in Figure~\ref{fig:v0nocross-a-1}. Let $T_{v_{bc}}$ be the
region formed by vertices $v_{bc}$, $v_b$ and $v_c$ (i.e., the
dark-gray colored region of Figure~\ref{fig:v0nocross-a-1}). Vertex
$v_d$, which has to be connected to vertices $v_a$ and $v_c$, and,
the central vertex $v_0$, cannot lie within $T_{v_{bc}}$, since edge
$(v_a,v_d)$ would have to cross edge $(v_b,v_c)$, which is already
involved in a right-angle crossing. Since vertex $v_d$ has to be
connected to vertex $v_0$, has to coincide with semi-line
$\ell_{v_0,v_c,v_{bc}}'$, as illustrated in
Figure~\ref{fig:v0nocross-a-1}. However, under this restriction, the
common neighbor $v_{cd}$ of vertices $v_c$ and $v_d$ cannot be
connected to vertex $v_0$, since edge $(v_0,v_{cd})$ should be
perpendicular to one of the edges of $T_{v_{bc}}$, which cannot be
accomplished without introducing an edge overlap with edge
$(v_0,v_d)$.

\begin{figure}[htb]
  \centering
  \begin{minipage}[b]{.48\textwidth}
     \centering
     \subfloat[\label{fig:v0nocross-a-1}{}]
     {\includegraphics[width=.85\textwidth]{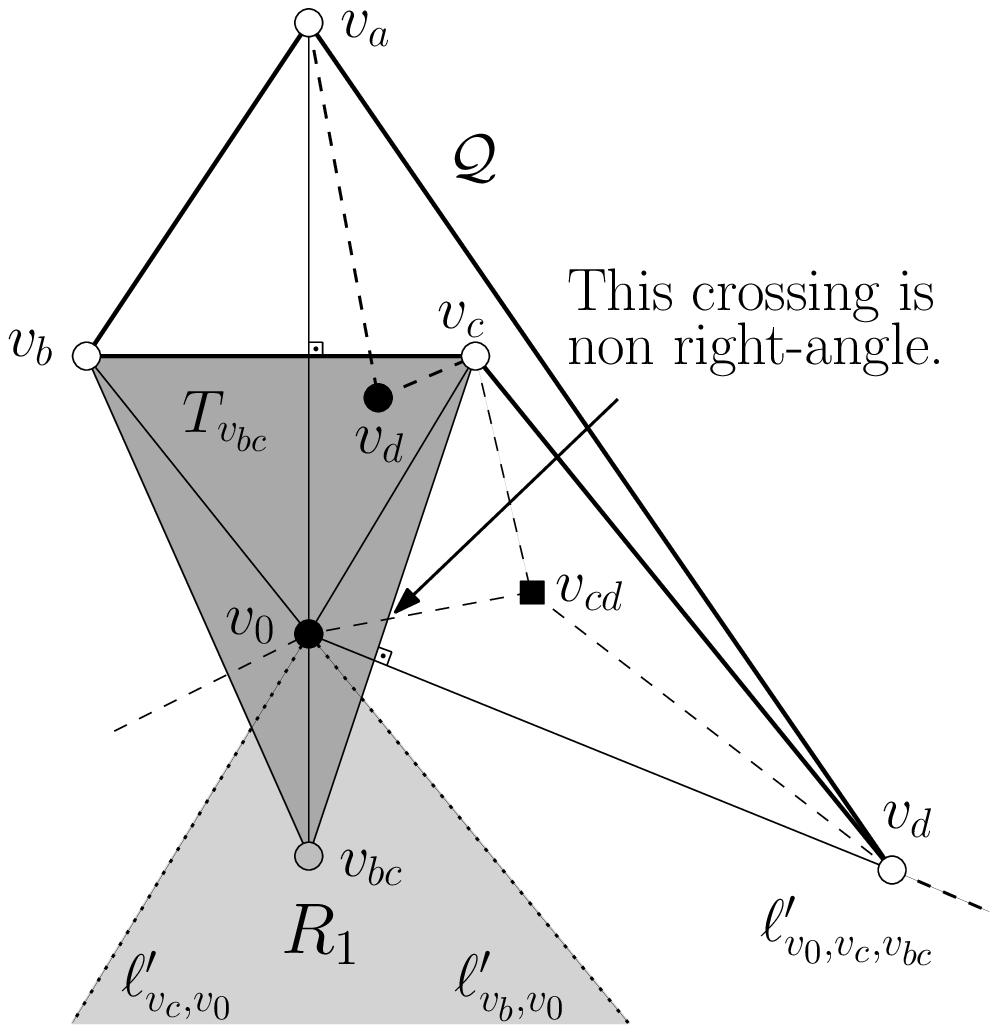}}
  \end{minipage}
  \hfill
  \begin{minipage}[b]{.48\textwidth}
     \centering
     \subfloat[\label{fig:v0nocross-b}{}]
     {\includegraphics[width=\textwidth]{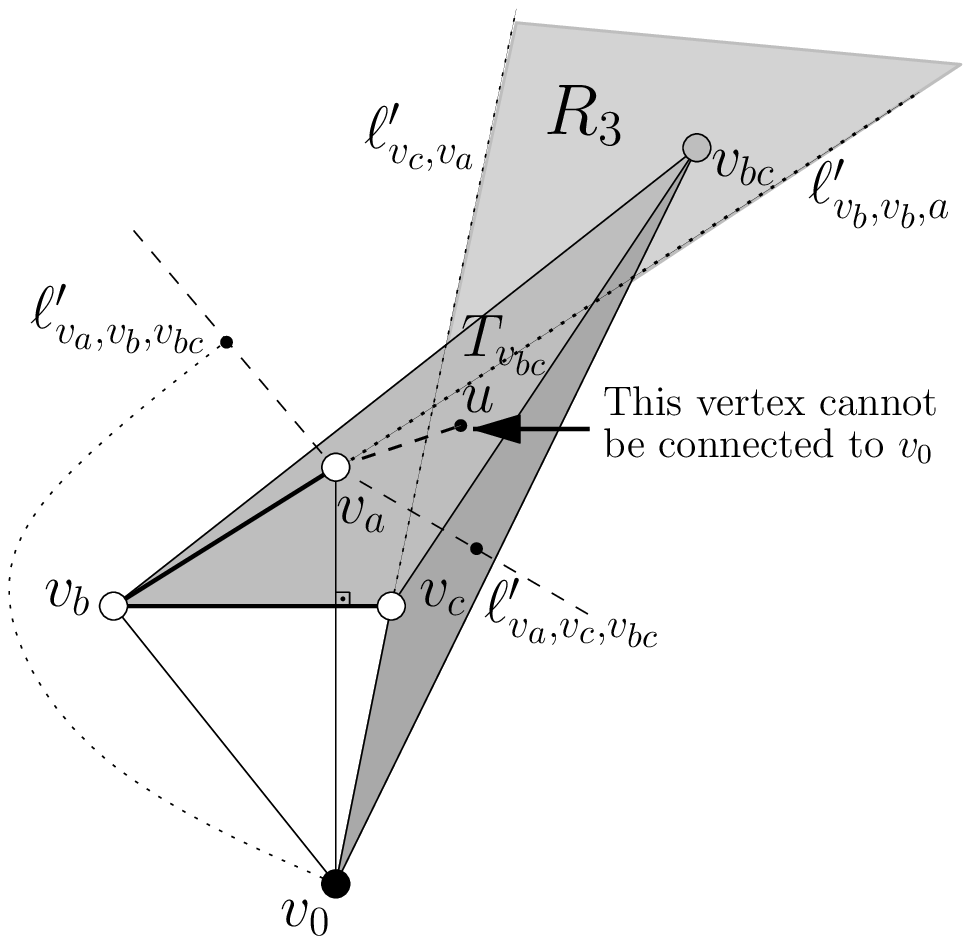}}
  \end{minipage}
  \caption{(a)~Vertex $v_{bc}$ lies in the interior of $R_1$. (b)~Vertex $v_{bc}$ lies in the interior of $R_3$.}
  \label{fig:v0nocross}
\end{figure}

\item [Case ii:] \emph{Vertex $v_{bc}$ is in the interior of either $R_2$ or $R_3$.}
Say without loss of generality that vertex $v_{bc}$ is in the
interior of $R_3$. This case is depicted in
Figure~\ref{fig:v0nocross-b}. Let $u$ be a vertex of the augmented
antiprism graph (except $v_a$) and assume that $u$ lies in the
interior of the triangular face, say $T_{v_{bc}}$, formed by
vertices $v_b$, $v_c$ and $v_{bc}$. Vertex $u$ has to be connected
to the central vertex $v_0$. Edge $(v_0,u)$ should not be involved
in crossings with edges $(v_0,v_a)$ and $(v_b,v_c)$, since they are
already involved in a right-angle crossing. If edge $(v_0,u)$
crosses edge $(v_0, v_{bc})$, then the three vertices $v_b$, $v_c$
and $v_{bc}$ that define triangle $T_{v_{bc}}$ must be collinear,
which leads to a contradiction. Therefore, triangle $T_{v_{bc}}$
cannot accommodate any other vertex (except $v_a$). Now observe that
each vertex of quadrilateral \qua~has degree five and there do not
exist three vertices of quadrilateral \qua, that have a common
neighbor (see Figure~\ref{fig:basic-gadget-1}). These properties
trivially hold for vertex $v_a$, since $v_a\in \mathcal{Q}$. Based
on the above properties, each neighbor of vertex $v_a$ can lie
either in the interior of the dark-gray region of
Figure~\ref{fig:v0nocross-b}, or, on the external face of the
already constructed drawing (along the dashed semi-lines
$\ell_{v_a,v_c,v_{bc}}'$ and $\ell_{v_a,v_b,v_{bc}}'$ of
Figure~\ref{fig:v0nocross-b}, respectively). This implies that we
can route only four vertices out of those incident to vertex $v_a$,
i.e., one of them should lie in the light-gray colored region of
Figure~\ref{fig:v0nocross-b} and thus, it cannot be connected to
vertex $v_0$. \qed

\end{description}
\end{proof}

\begin{lemma}
In any RAC drawing of the augmented square antiprism graph,
quadrilateral \quai, $i=1,2$ is drawn planar. \label{lem:p6Planar}
\end{lemma}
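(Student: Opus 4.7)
I propose a proof by contradiction. Suppose that $\mathcal{Q}_i$, with cyclic vertices $v_a,v_b,v_c,v_d$, is not drawn planar. Since it is a straight-line $4$-cycle, two of its non-adjacent edges must cross, and by the cyclic symmetry of the labelling I may assume these are $(v_a,v_b)$ and $(v_c,v_d)$; the RAC property then yields $\ell_{a,b}\perp \ell_{c,d}$.

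The central observation concerns the four edges joining $v_0$ to $\mathcal{Q}_i$. If, say, the edge $(v_0,v_a)$ crosses the segment $(v_c,v_d)$, then by the RAC condition $(v_0,v_a)\perp(v_c,v_d)$, and combined with $(v_a,v_b)\perp(v_c,v_d)$ this forces $(v_0,v_a)\parallel(v_a,v_b)$; since the two parallel segments share the endpoint $v_a$, they lie on the same line, so $v_0\in\ell_{a,b}$. A short case analysis then shows $v_0\in\ell_{a,b}$ is impossible in a proper drawing: either $v_0$ is interior to the edge $(v_a,v_b)$, or one of the straight-line edges $(v_0,v_a),(v_0,v_b)$ contains the opposite endpoint of $(v_a,v_b)$, both of which produce a forbidden vertex-on-edge incidence. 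The same reasoning rules out $v_0\in\ell_{c,d}$, so neither of $(v_0,v_a),(v_0,v_b)$ may cross segment $(v_c,v_d)$, and neither of $(v_0,v_c),(v_0,v_d)$ may cross segment $(v_a,v_b)$.

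To close the argument, the remaining feasible positions of $v_0$ must be eliminated using the rest of the augmented antiprism. Each vertex of $\mathcal{Q}_i$ has degree five, and any two consecutive vertices of $\mathcal{Q}_i$ share exactly $v_0$ and one vertex of $\mathcal{Q}_{3-i}$ as common neighbours. Combining Lemma~\ref{lem:v0nocross}, applied whenever $v_0$ ends up exterior to $\mathcal{Q}_i$ with some $(v_0,v)$-edge still crossing a $\mathcal{Q}_i$-edge, with Property~\ref{prp:triangle-edges}, applied to the triangles $v_0v_xv_y$ for consecutive $v_x,v_y\in\mathcal{Q}_i$ together with their forced common neighbour in $\mathcal{Q}_{3-i}$, yields a contradiction in each surviving case. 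The symmetric subcase where the crossed pair is $(v_b,v_c)$ and $(v_d,v_a)$ is handled by the cyclic relabelling $v_a\to v_b\to v_c\to v_d\to v_a$. The main obstacle I foresee is the asymmetric scenario in which the crossing of $(v_a,v_b)$ and $(v_c,v_d)$ is far from balanced; the perpendicularity argument alone leaves a thin feasible region for $v_0$, and closing it off cleanly requires the structural argument on the common neighbours to be performed with care.
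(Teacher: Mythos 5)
Your opening move is sound and arguably cleaner than the paper's: if some edge $(v_0,v_a)$ crosses $(v_c,v_d)$, then $(v_0,v_a)$ and $(v_a,v_b)$ are both perpendicular to $(v_c,v_d)$, hence parallel, hence collinear through their shared endpoint $v_a$, forcing the degenerate incidence $v_0\in\ell_{v_a,v_b}$. (The paper handles the corresponding configuration --- $v_0$ inside a bounded triangular face of the crossed quadrilateral --- by invoking Property~\ref{prp:three-crossing-edges} instead.) Note, however, that this perpendicularity argument only rules out crossings of $v_0$'s edges with the two edges $(v_a,v_b)$ and $(v_c,v_d)$ already involved in the crossing; an edge such as $(v_0,v_b)$ leaving a bounded face through $(v_d,v_a)$ or $(v_b,v_c)$ is not covered by it, and you do not address that possibility.

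The genuine gap is the case you defer to ``a structural argument on the common neighbours performed with care'': $v_0$ lying in the outer face of the crossed quadrilateral with its four edges introducing no further crossings on \qua. This is not a residual technicality --- it is the substance of the lemma, and it is where the paper spends essentially all of its effort. The paper's argument pins down, region by region, where the common neighbour $v_{ab}$ of $v_a$ and $v_b$ (a vertex of the other quadrilateral) can lie, concluding that it must sit in a narrow region along the semi-line $\ell_{v_b,v_c,v_0}'$; symmetrically, $v_{cd}$ must sit along $\ell_{v_c,v_0,v_b}'$; and then their common neighbour $v_{bc}$ would have to lie on the intersection of two semi-lines, producing edge overlaps. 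Your proposal names the right ingredients (Lemma~\ref{lem:v0nocross}, Property~\ref{prp:triangle-edges}, the degree and common-neighbour structure of the antiprism) but never identifies the surviving configurations, never says which triangle Property~\ref{prp:triangle-edges} is applied to or which vertices lie inside and outside it, and never exhibits the final contradiction. As written, the proof stops exactly where the difficulty begins, and you acknowledge as much in your closing sentence.
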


\begin{proof}
Let \qua~be one of quadrilaterals \quai, $i=1,2$, and let, as in the
previous lemma, $v_a$, $v_b$, $v_c$ and $v_d$ be its vertices,
consecutive along quadrilateral \qua. Assume to the contrary that in
a RAC drawing of the augmented square antiprism graph, quadrilateral
\qua~is not drawn planar, and say that edges $(v_a,v_b)$ and
$(v_c,v_d)$ form a right-angle crossing. This case is illustrated in
Figure~\ref{fig:q4planar}. In the following, we will lead to a
contradiction the cases, where central vertex $v_0$ lies (i)~in the
interior of a triangular face of quadrilateral \qua, and, (ii)~on
the external face of quadrilateral \qua. Note that it is not
feasible a non-planar RAC drawing of a quadrilateral to contain more
than one (right-angle) crossing. Hence, its bounded faces are
triangular.

\begin{itemize}

\item \textbf{Case i:} \emph{Vertex $v_0$ lies in the interior of a triangular face of quadrilateral \qua.}
Assume without loss of generality that vertex $v_0$ (which is
incident to all vertices of quadrilateral \qua) lies in the interior
of the triangular face formed by vertices $v_b$, $v_c$ and the
intersection point of edges $(v_a,v_b)$ and $(v_c,v_d)$, as in
Figure~\ref{fig:q4planar-2}. In this case, edges $(v_0,v_a)$,
$(v_a,v_b)$ and $(v_c,v_d)$ mutually cross, which leads to a
contradiction, due to Property~\ref{prp:three-crossing-edges}.

\begin{figure}[h!tb]
  \begin{minipage}[b]{.44\textwidth}
    \subfloat[\label{fig:q4planar-2}{Vertex $v_0$ lies in the interior of a triangular face.}]
     {\includegraphics[width=\textwidth]{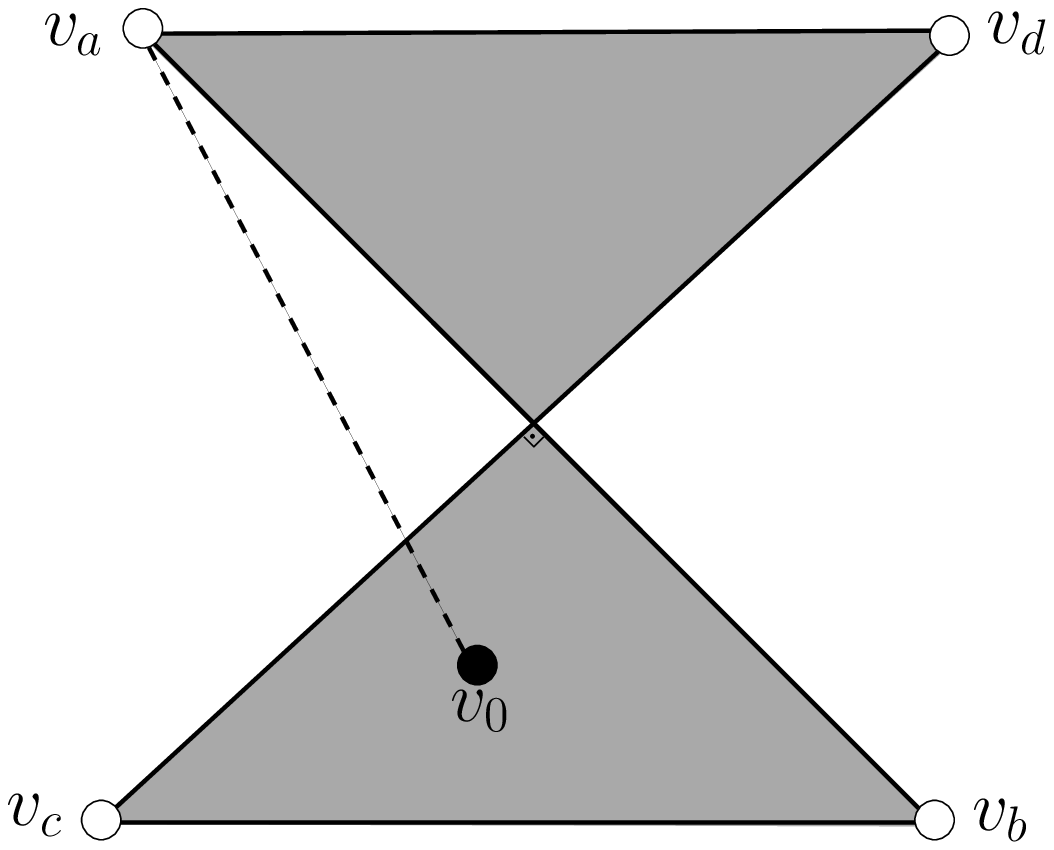}}
  \end{minipage}
  \hfill
  \begin{minipage}[b]{.52\textwidth}
    \centering
    \subfloat[\label{fig:q4planar-3}{Vertex $v_0$ lies on the external face of quadrilateral \qua.}]
    {\includegraphics[width=\textwidth]{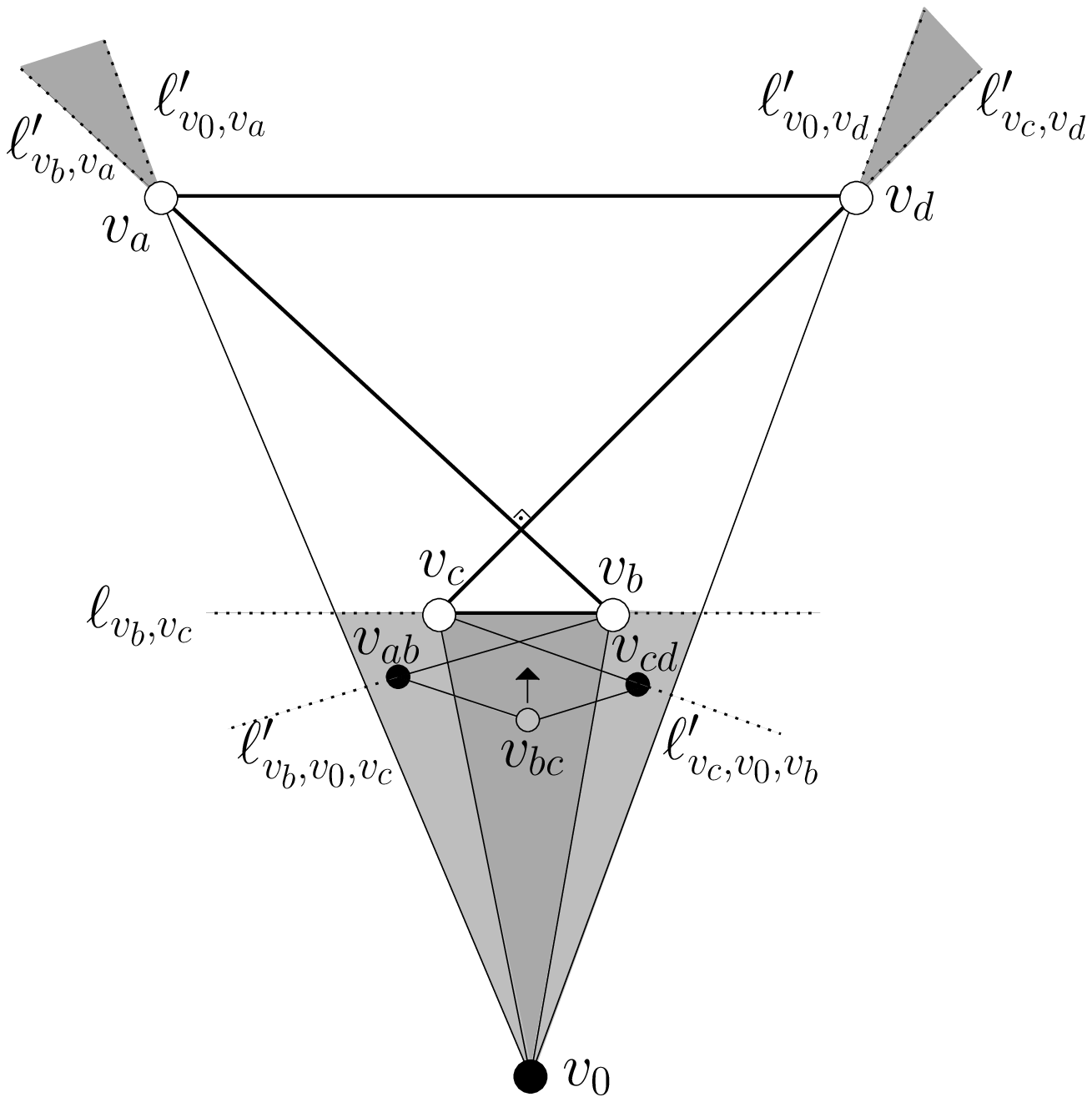}}
  \end{minipage}
  \caption{Quadrilateral \qua~is not drawn planar.}
  \label{fig:q4planar}
\end{figure}

\item \textbf{Case ii:} \emph{Vertex $v_0$ lies on the external face of
quadrilateral \qua.} This case is illustrated in
Figure~\ref{fig:q4planar-3}. Recall that by
Lemma~\ref{lem:v0nocross}, vertex $v_0$ cannot introduce additional
crossings on quadrilateral \qua. We will first show that the common
neighbor $v_{ab}$ of vertices $v_a$ and $v_b$ cannot lie in the
region ``above'' line $\ell_{v_b,v_c}$. In the case, where vertex
$v_{ab}$ lies in the region ``above'' $\ell_{v_b,v_c}$ and to the
``left'' of both edge $(v_c,v_d)$ and semi-line $\ell_{v_0,v_d}'$,
edge $(v_b,v_{ab})$ would cross edge $(v_c,v_d)$, which is not
permitted by Property~\ref{prp:three-crossing-edges}. Symmetrically,
vertex $v_{ab}$ cannot lie in the region ``above'' $\ell_{v_b,v_c}$
and to the ``right'' of both edge $(v_a,v_b)$ and semi-line
$\ell_{v_0,v_a}'$. If vertex $v_{ab}$ lies within the left
gray-colored unbounded region of Figure~\ref{fig:q4planar-3} (that
is formed by semi-lines $\ell_{v_0,v_a}'$, $\ell_{v_b,v_a}'$), then,
edge $(v_{ab}, v_b)$ crosses two non-parallel edges $(v_a,v_d)$ and
$(v_a,v_d)$. In the case where, $v_{ab}$ lies in the right
gray-colored unbounded region of Figure~\ref{fig:q4planar-3} (that
is formed by semi-lines $\ell_{v_0,v_d}'$, $\ell_{v_c,v_d}'$), then
$(v_a,v_{ab})$ either crosses both $(v_a,v_d)$ and $(v_c,v_d)$ which
are non-parallel, or crosses edge $(v_0,v_d)$ forming a non-right
angle crossing. In the case, where vertex $v_{ab}$ lies in the
interior of the triangle formed by vertices $v_b$, $v_c$ and the
intersection point of edges $(v_a,v_b)$ and $(v_c,v_d)$, edge
$(v_a,v_{ab})$ would cross edge $(v_c,v_d)$, which leads to a
violation Property~\ref{prp:three-crossing-edges}. Therefore, vertex
$v_{ab}$ should be ``below'' $\ell_{v_b,v_c}$.

We continue our reasoning on vertex $v_{ab}$. Vertex $v_{ab}$ cannot
lie to the ``left'' of edge $(v_0,v_a)$, since edge $(v_b,v_{ab})$
or $(v_a,v_{ab})$ would cross more than one (non-parallel) edges
incident to vertex $v_0$. If vertex $v_{ab}$ lies to the ``right''
of edge $(v_0,v_b)$, then edge $(v_a,v_{ab})$ either crosses edge
$(v_c,v_d)$, that it is not permitted by
Property~\ref{prp:three-crossing-edges}, or, both edges $(v_0,v_b)$
and $(v_0,v_c)$, that are non-parallel. We complete our reasoning on
vertex $v_{ab}$ by the triangle formed by vertices $v_0$, $v_b$ and
$v_c$. In this case, $(v_a,v_{ab})$ should be perpendicular to edge
$(v_0, v_c)$. This suggests that the angle formed by edges
$(v_c,v_d)$ and $(v_0,v_c)$ is greater that $180^o$ and therefore,
edge $(v_0,v_d)$ either crosses $(v_a,v_b)$, or another edge of
quadrilateral \qua, which trivially leads to a contradiction, due to
Property~\ref{prp:three-crossing-edges}, or due to
Lemma~\ref{lem:v0nocross}, respectively. Based on the above, vertex
$v_{ab}$ should be within the left gray-colored region of
Figure~\ref{fig:q4planar-3}, along semi-line $\ell_{v_b,v_c,v_0}'$.
Following a similar reasoning scheme, as for vertex $v_{ab}$, we can
prove that the common neighbor $v_{cd}$ of vertices $v_c$ and $v_d$
should lie within the right light-gray colored region of
Figure~\ref{fig:q4planar-3}, along semi-line $\ell_{v_c,v_0,v_b}'$.
However, in this case, a common neighbor of vertices $v_{ab}$ and
$v_{cd}$, say $v_{bc}$, should lie on the intersection of semi-lines
$\ell_{v_b,v_0,v_c}'$ and $\ell_{v_c,v_0,v_b}'$, which leads to edge
overlaps. Thus, there exists no feasible placement for vertex
$v_{bc}$. \qed
\end{itemize}
\end{proof}

\begin{lemma}
In any RAC drawing of the augmented square antiprism graph, the
central vertex $v_0$ lies in the interior of quadrilateral \quai,
$i=1,2$. \label{lem:v0internal}
\end{lemma}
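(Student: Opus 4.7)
The plan is to argue by contradiction. Assume to the contrary that in some RAC drawing of the augmented square antiprism graph $v_0$ lies in the exterior of some $\mathcal{Q}_i$; without loss of generality take $\mathcal{Q} := \mathcal{Q}_1$ with consecutive vertices $v_a, v_b, v_c, v_d$. By Lemma~\ref{lem:p6Planar}, $\mathcal{Q}$ is drawn planar, and by Lemma~\ref{lem:v0nocross}, the four edges $(v_0, v_a), (v_0, v_b), (v_0, v_c), (v_0, v_d)$ avoid every edge of $\mathcal{Q}$. Hence $v_0$ together with $\mathcal{Q}$ and these four segments induces a plane drawing of the wheel $W_4$ in which the interior of $\mathcal{Q}$ is a bounded face and $v_0$ lies outside that face.

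Next we derive a structural restriction on the placement of $\mathcal{Q}_2$. Each vertex $v_{ab}, v_{bc}, v_{cd}, v_{da}$ of $\mathcal{Q}_2$ is adjacent to $v_0$, and by Lemma~\ref{lem:v0nocross} the straight segment from $v_0$ to any such neighbour cannot cross an edge of $\mathcal{Q}$. Since $v_0$ sits in the exterior of $\mathcal{Q}$, every vertex of $\mathcal{Q}_2$ must sit in that exterior as well. Furthermore, the four segments from $v_0$ to the vertices of $\mathcal{Q}$ partition the exterior of $\mathcal{Q}$ into three bounded triangular faces of the plane $W_4$ (each incident to one side of $\mathcal{Q}$) together with one unbounded face; hence each $v_{ij}$ must lie in one of these four sub-regions, and its adjacencies to $v_i$ and $v_j$ significantly constrain which sub-region is admissible.

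The core of the proof is then a geometric case analysis in the spirit of the proofs of Lemma~\ref{lem:v0nocross} and Lemma~\ref{lem:p6Planar}. For every admissible placement of the four vertices of $\mathcal{Q}_2$ we combine three ingredients: (i)~Property~\ref{prp:triangle-edges} applied to the triangle $v_i v_j v_{ij}$, which constrains which other vertices may lie in its interior; (ii)~Property~\ref{prp:three-crossing-edges}, which forbids any new edge from crossing a pair of edges that already cross at a right angle; and (iii)~the right-angle condition on any new crossing, which rigidly pins a remaining common neighbour onto a specific perpendicular semi-line. In each configuration these constraints either conflict directly, force two edges to overlap, or force a common neighbour onto the intersection of two incompatible perpendicular semi-lines, echoing the final contradiction in Case~ii of Lemma~\ref{lem:p6Planar}. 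The principal obstacle is the bookkeeping of this case analysis, since several sub-cases arise depending on whether $\mathcal{Q}$ is drawn convex or non-convex and in which of the four exterior regions $v_0$ actually sits; the difficulty is to handle these sub-cases uniformly enough that Properties~\ref{prp:three-crossing-edges} and~\ref{prp:triangle-edges}, together with the perpendicularity of crossings, always combine to give the desired contradiction.
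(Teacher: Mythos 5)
Your overall strategy (reduce to a plane drawing of the wheel induced by $\mathcal{Q}_1$ and $v_0$, decompose the plane into the resulting faces, and place the vertices of $\mathcal{Q}_2$ region by region) has the same skeleton as the paper's proof, which introduces the triangle $T_{v_0}$ spanned by $v_0$ and the two hull vertices of $\mathcal{Q}_1$ and then runs a five-case analysis. However, there is a genuine gap in the step where you claim that ``every vertex of $\mathcal{Q}_2$ must sit in the exterior of $\mathcal{Q}_1$.'' You justify this by Lemma~\ref{lem:v0nocross}, but that lemma only forbids an edge from $v_0$ to a vertex \emph{of $\mathcal{Q}_1$ itself} from crossing an edge of $\mathcal{Q}_1$; it says nothing about the edges from $v_0$ to the vertices of $\mathcal{Q}_2$. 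Such an edge is perfectly entitled to cross an edge of $\mathcal{Q}_1$ at a right angle --- indeed, in the two legitimate embeddings of the graph the four edges from $v_0$ to the outer quadrilateral do exactly that through the inner one. Consequently your case analysis silently discards the configurations in which some or all vertices of $\mathcal{Q}_2$ lie \emph{inside} $\mathcal{Q}_1$ while $v_0$ is outside. These are precisely Cases~ii and~v of the paper's proof, and they are not vacuous: when all of $\mathcal{Q}_2$ lies inside $\mathcal{Q}_1$, three of its four vertices \emph{can} reach $v_0$ by perpendicularly crossing three distinct edges of $\mathcal{Q}_1$, and the contradiction only arises because the fourth vertex has no edge of $\mathcal{Q}_1$ left to cross. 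That counting argument is a key idea of the paper's proof and is absent from your proposal.

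A secondary weakness is that the ``core'' of your argument is only described generically --- combine Properties~\ref{prp:three-crossing-edges} and~\ref{prp:triangle-edges} with perpendicularity constraints in every admissible configuration --- without exhibiting the contradiction in any concrete case. The paper in fact needs different arguments in the different regions (Property~\ref{prp:triangle-edges} applied to $T_{v_0}$ for vertices on the unbounded face, the edge-counting argument for vertices inside $\mathcal{Q}_1$, and the adjacency pattern between consecutive vertices of $\mathcal{Q}_1$ and $\mathcal{Q}_2$ for the mixed placements), so the cases cannot be dispatched as uniformly as you suggest; as written, the proposal is a plan with a flawed reduction step rather than a proof.
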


\begin{proof}

From Lemma~\ref{lem:p6Planar}, it follows that quadrilateral
\quai~should be drawn planar, for each $i=1,2$. In order to prove
this lemma, we assume to the contrary that central vertex $v_0$ lies
on the exterior of one of the two quadrilaterals. Say, w.l.o.g., on
the exterior of quadrilateral \quaa. Let $v_a$, $v_b$, $v_c$ and
$v_d$ be \quaa's vertices, consecutive along quadrilateral \quaa.
Then, by Lemma~\ref{lem:v0nocross}, vertex $v_0$ cannot contribute
additional crossings on quadrilateral \quaa. This suggests that the
drawing of the graph induced by quadrilateral \quaa~and vertex $v_0$
will resemble the one depicted in Figure~\ref{fig:v0internal-gen}.
We denote by $T_{v_0}$ the triangle formed by vertex $v_0$ and the
two vertices, which are on the convex hall of \quaa$\cup v_0$~(refer
to the gray-colored triangle of Figure~\ref{fig:v0internal-gen}).

\begin{figure}[htb]
  \centering
  \includegraphics[width=.35\textwidth]{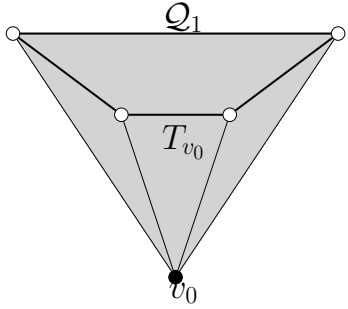}
  \caption{Any drawing of the graph induced by \quaa~and $v_0$ has to resemble to this one.}
  \label{fig:v0internal-gen}
\end{figure}

Before we proceed with the detailed proof of this lemma, we recall
some properties of the augmented square antiprism graph. Two
consecutive vertices of \quaa~(\quab)~share a common vertex of
quadrilateral \quab~(\quaa). Each vertex of quadrilateral
\quaa~(\quab)~should be connected to two consecutive vertices of
quadrilateral \quab~(\quaa). We will prove that (i)~no vertex of
\quab~lies outside $T_{v_0}$, (ii)~\quab~cannot entirely lie in the
interior of \quaa, (iii)~\quab~cannot entirely lie in the interior
of a triangular face of $T_{v_0}$, (iv)~\quab~cannot entirely lie
within two adjacent triangular faces of $T_{v_0}$, (v)~\quab~cannot
cross \quaa, such that some of the vertices of \quab~reside within a
triangular face of $T_{v_0}$, whereas the remaining ones within
\quaa. Note that quadrilateral \quab~cannot entirely lie within
three triangular faces of $T_{v_0}$ incident to vertex $v_0$.

\begin{description}

\item [Case i:] We prove that no vertex of quadrilateral \quab~lies on
the external face of the graph induced by quadrilateral \quaa~and
vertex $v_0$, i.e., outside $T_{v_0}$. For the sake of
contradiction, assume that there exists a vertex of quadrilateral
\quab, say $v_{ab}$, that lies on the external face of the graph
induced by quadrilateral \quaa~and vertex $v_0$ (see
Figure~\ref{fig:q4external}). Vertex $v_{ab}$ should be connected to
vertices $v_a$ and $v_b$ of quadrilateral \quaa, and to the central
vertex $v_0$. If both vertices $v_a$ and $v_b$ are inside triangle
$T_{v_0}$, then vertex $v_{ab}$, which is assumed to lie on the
external face of this graph, would violate
Property~\ref{prp:triangle-edges}, since vertices $v_a$ and $v_b$
would lie in the interior of $T_{v_0}$, whereas vertex $v_{ab}$
outside. Therefore, at least one of vertices $v_a$ and $v_b$ should
be a corner of $T_{v_0}$. Then, vertex $v_{ab}$ contributes either
none (see Figure~\ref{fig:q4external-2}), or a single right-angle
crossing (see Figure~\ref{fig:q4external-1}).

\begin{figure}[h!tb]
  \centering
  \begin{minipage}[b]{.46\textwidth}
     \centering
     \subfloat[\label{fig:q4external-1}{}]
     {\includegraphics[width=\textwidth]{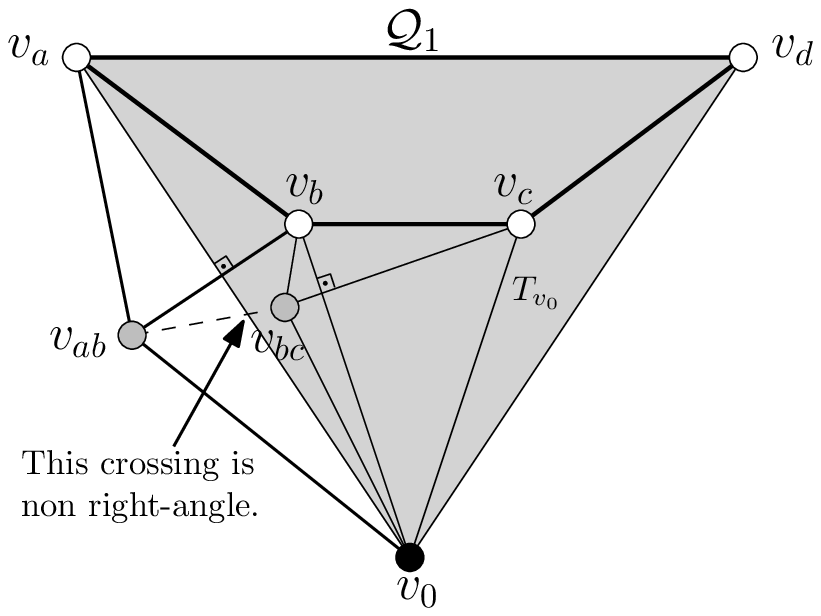}}
   \end{minipage}
  \hfill
  \begin{minipage}[b]{.50\textwidth}
     \centering
     \subfloat[\label{fig:q4external-2}{}]
     {\includegraphics[width=\textwidth]{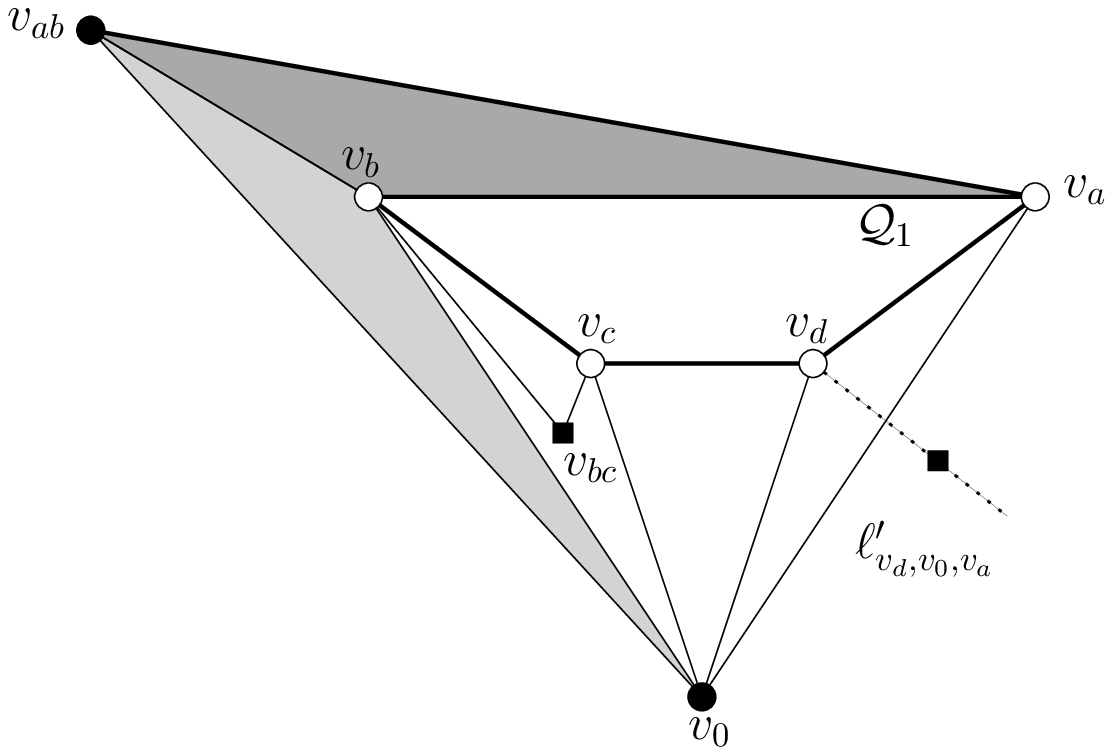}}
  \end{minipage}
  \begin{minipage}[b]{.55\textwidth}~\end{minipage}
  \caption{Vertex $v_{ab}$ of quadrilateral \quab~lies on the external face of the graph induced by quadrilateral \quaa~and vertex $v_0$.}
  \label{fig:q4external}
\end{figure}

Let now $v_{bc}$ be a vertex of quadrilateral \quab, which is
incident to vertex $v_{ab}$. Vertex $v_{bc}$ is also incident to two
consecutive vertices of quadrilateral \quaa, i.e., $v_b$ and $v_c$.
We first turn our attention in the case where $v_{ab}$ contributes a
single right-angle crossing on the graph induced by quadrilateral
\quaa~and vertex $v_0$ (see Figure~\ref{fig:q4external-1}). Then, by
Property~\ref{prp:triangle-edges}, vertex $v_{bc}$ should lie in the
interior of triangle $T_{v_0}$. This immediately leads to a
contradiction, since edge $(v_{ab},v_{bc})$ should cross edge
$(v_0,v_a)$, which is already involved in a right-angle crossing
(see Figure~\ref{fig:q4external-1}).

Consider now the case where vertex $v_{ab}$ contributes no crossing
on $T_{v_0}$. Observe that vertex $v_{bc}$, which is adjacent to
vertex $v_{ab}$, vertices $v_b$ and $v_c$ of \quaa, and the central
vertex $v_0$ cannot lie in the dark-gray region of
Figure~\ref{fig:q4external-2}, since in this case, edge $(v_a,v_b)$
would be crossed by more than one (non-parallel) edges, incident to
$v_{bc}$. The case where vertex $v_{bc}$ lies within the light-gray
colored region of Figure~\ref{fig:q4external-2}, leads to a
situation similar to the one depicted in
Figure~\ref{fig:q4external-1}. Therefore, vertex $v_{bc}$ should lie
``somewhere'' in the interior of $T_{v_0}$. Let $v_{ad}$ be the
common neighbor of vertices $v_a$ and $v_d$ of \quaa, and vertex
$v_{ab}$. This vertex cannot lie within the dark-gray region of
Figure~\ref{fig:q4external-2}, for the same reason that vertex
$v_{bc}$ couldn't. In addition, vertex $v_{ad}$ cannot lie in the
interior of $T_{v_0}$, since in this case, both vertices $v_{bc}$
and $v_{ad}$ (that are in $T_{v_0}$), should be connected to
$v_{ab}$ (that is not in $T_{v_0}$), which trivially violates
Property~\ref{prp:triangle-edges}. Therefore, vertex $v_{ad}$ should
be on the external face of the graph induced by \quaa~and vertices
$v_0$ and $v_{ad}$, along semi-line $\ell_{v_d,v_0,v_a}'$. However,
in this case, we are also led to a situation similar to the one
depicted in Figure~\ref{fig:q4external-1}, and subsequently, to a
contradiction.

\item [Case ii:] Say that quadrilateral
\quab~entirely lies within quadrilateral \quaa~(see
Figure~\ref{fig:q4internal-1}). In this case, its vertices should be
connected to vertex $v_0$. For three vertices of quadrilateral
\quab, this can be accomplished using the three available edges of
quadrilateral \quaa~(refer to the dotted edges of
Figure~\ref{fig:q4internal-1}), such that the right-angle crossings
occur along them. However, the fourth vertex cannot be connected to
vertex $v_0$, since only three edges of quadrilateral \quaa~can be
used to realize connections with vertex $v_0$ (see the topmost edge
of Figure~\ref{fig:q4internal-1}).

\begin{figure}[h!tb]
  \centering
  \begin{minipage}[b]{.30\textwidth}
     \centering
     \subfloat[\label{fig:q4internal-1}{}]
     {\includegraphics[width=\textwidth]{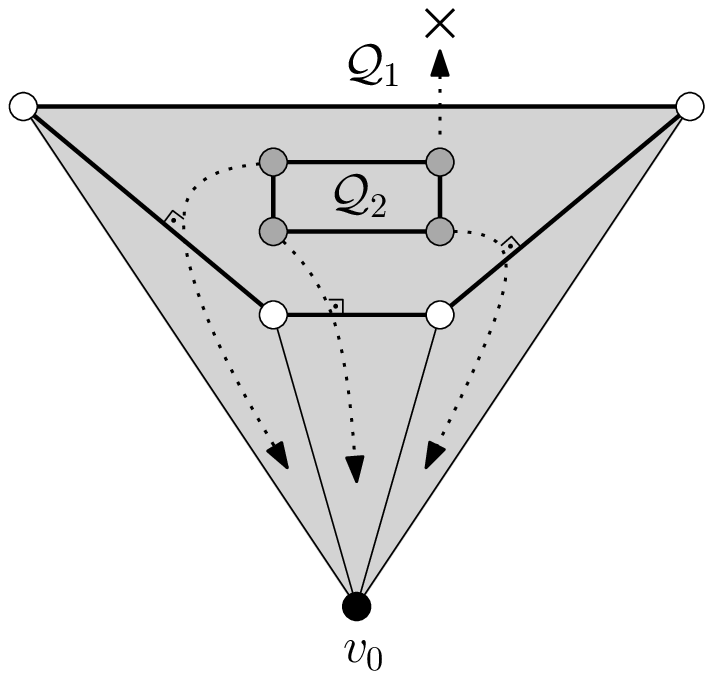}}
   \end{minipage}
  \hfill
  \begin{minipage}[b]{.33\textwidth}
     \centering
     \subfloat[\label{fig:q4internal-2}{}]
     {\includegraphics[width=\textwidth]{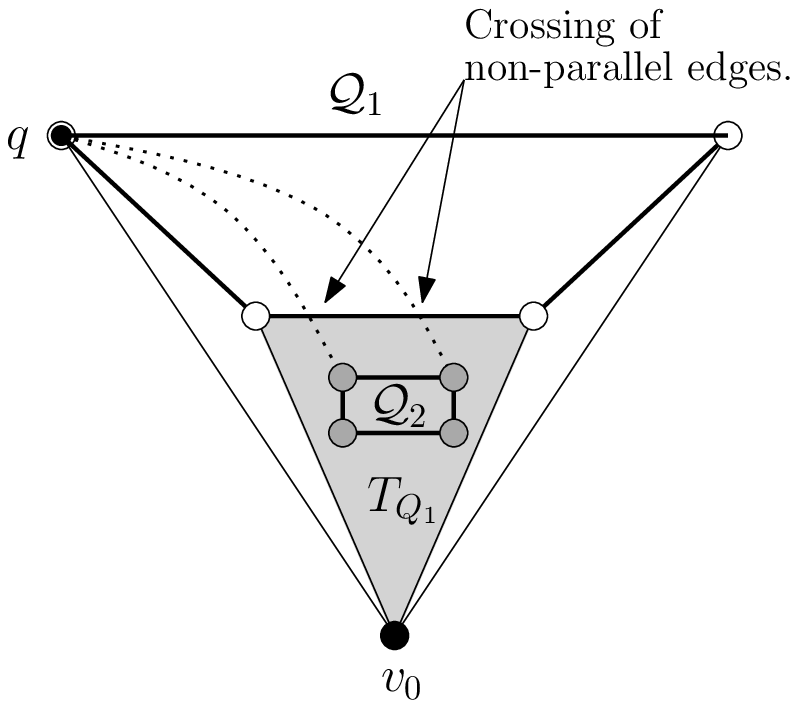}}
  \end{minipage}
  \hfill
  \begin{minipage}[b]{.33\textwidth}
     \centering
     \subfloat[\label{fig:q4internal-3}{}]
     {\includegraphics[width=\textwidth]{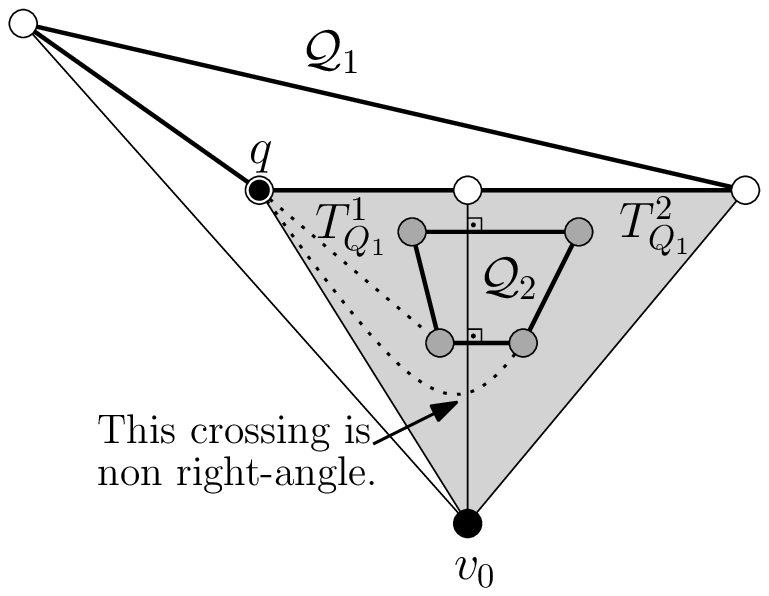}}
  \end{minipage}
  \caption{Quadrilateral \quab~lies (a)~in the interior of \quaa, (b)~in the interior of a triangular face
  $T_{Q_2}$ incident to vertex $v_0$, and (c)~within two adjacent triangular faces incident to vertex $v_0$.}
  \label{fig:q4internal-a}
\end{figure}

\item [Case iii:] Assume now that quadrilateral \quab~entirely lies within a
triangular face, say $T_{Q_1}$, incident to vertex $v_0$ (see
Figure~\ref{fig:q4internal-2}). Then, there exists at least one
vertex of quadrilateral \quaa, say vertex $q$, which is incident to
two vertices of quadrilateral \quab, and is not identified with a
vertex at the corners of $T_{Q_1}$ (see
Figure~\ref{fig:q4internal-2}). Vertex $q$ has to be connected to
two vertices of quadrilateral \quab. However, vertex $q$ is external
to triangle $T_{Q_1}$, whereas its two incident vertices in the
interior of this triangle, which leads to a violation of
Property~\ref{prp:triangle-edges}.

\item [Case iv:] Say that quadrilateral \quab~entirely lies within two adjacent
triangular faces, say $T_{Q_1}^1$ and $T_{Q_1}^2$, incident to
vertex $v_0$ (see Figure~\ref{fig:q4internal-3}). Then,
quadrilateral \quab~should be ``perpendicular'' to the common edge
of $T_{Q_1}^1$ and $T_{Q_1}^2$. Recall that two consecutive vertices
of quadrilateral \quab~share a common vertex of quadrilateral \quaa.
Hence, we can find a vertex $q$ of quadrilateral \quaa, which is not
identified with the common vertex of $T_{Q_1}^1$ and $T_{Q_1}^2$,
and is incident to a pair of vertices of quadrilateral \quab, that
do not lie in the same triangular face (i.e., the topmost vertices
of quadrilateral \quab~or the bottommost vertices of quadrilateral
\quab~in Figure~\ref{fig:q4internal-3}). This leads to a
contradiction, since the common edge of $T_{Q_1}^1$ and $T_{Q_4}^2$
cannot be crossed, as it is already involved in a right-angle
crossing (refer to the dotted-edges of
Figure~\ref{fig:q4internal-3}).

\item [Case v:] We consider the case where quadrilateral \quab~crosses quadrilateral
\quaa, such that some of the vertices of quadrilateral \quab~reside
within a triangular face of $T_{v_0}$, whereas the remaining ones
within quadrilateral \quaa. We will lead to a contradiction the
cases where: (i)~Two vertices of \quab~lie in the interior of a
single triangular face incident to $v_0$, (ii)~two vertices of
\quab~lie in the interior of two adjacent triangular faces,
(iii)~three vertices of \quab~lie in the interior of two adjacent
triangular faces and two of them lie in the same triangular face of
$T_{v_0}$, (iv)~three vertices of \quab~lie in the interior of three
pairwise-adjacent triangular faces incident to vertex $v_0$. Recall
that none of the vertices of \quab~lies in the external face of the
graph induced by quadrilateral \quaa~and vertex $v_0$. Let, with a
slight abuse of notation, $q_a$, $q_b$, $q_c$ and $q_d$ be the
vertices of quadrilateral \quab. Assume first that vertices $q_a$
and $q_b$ are in the interior of a single triangular face, whereas
vertices $q_c$ and $q_d$ in the interior of quadrilateral \quaa~(see
Figure~\ref{fig:q4internal-4}). In this case, edges $(q_a,q_d)$ and
$(q_b,q_c)$ should perpendicularly cross quadrilateral \quaa. The
connections between vertices $q_c$ and $q_d$ with vertex $v_0$ can
be accomplished using two of the available edges of quadrilateral
\quaa, such that the right-angle crossings occur along them (refer
to dotted edges of Figure~\ref{fig:q4internal-4}). Thus, the
triangular faces that are adjacent to the one that accommodates
vertices $q_a$ and $q_b$ (refer to the light-gray faces of
Figure~\ref{fig:q4internal-4}) cannot be further used to connect
vertices of quadrilateral \quaa~to vertices of quadrilateral \quab.
Then, there exists a vertex of quadrilateral \quaa, say $q$, that it
is not identified with any of the vertices of the face that
accommodates vertices $q_a$ and $q_b$, and either $q_a$ or $q_b$ has
to be connected to vertex $q$. However, this cannot be accomplished,
since the edge from either $q_a$ or $q_b$ to vertex $q$ would cross
more than one non-parallel edges (refer to the dashed edge of
Figure~\ref{fig:q4internal-4}).

\begin{figure}[htb]
  \centering
  \includegraphics[width=.55\textwidth]{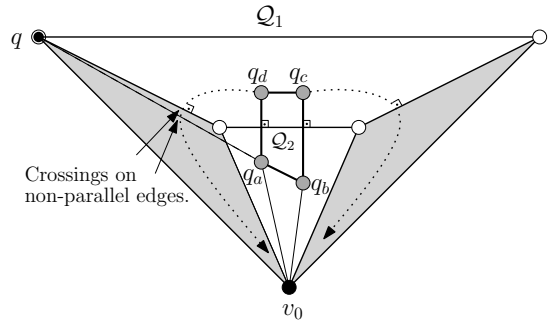}
  \caption{Vertices $q_a$ and $q_b$ are in the interior of a single triangular face incident to $v_0$.}
  \label{fig:q4internal-4}
\end{figure}

Say now that vertices $q_a$ and $q_b$ are in the interior of two
adjacent triangular faces incident to vertex $v_0$, whereas vertices
$q_c$ and $q_b$ within quadrilateral \quaa. This case is illustrated
in Figure~\ref{fig:q4internal-5}. Then, one of the vertices that lie
in the interior of quadrilateral \quaa, say vertex $q_d$, can be
connected to vertex $v_0$ using one of the available edges of
quadrilateral \quaa~(refer to the dotted edge of
Figure~\ref{fig:q4internal-5}). However, vertex $q_c$ cannot be
connected to vertex $v_0$, since only three of the edges of
quadrilateral \quaa~can be used to realize connections from the
vertices that lie within quadrilateral \quaa, to vertex $v_0$.

Consider now the case where three vertices, say $q_a$, $q_b$ and
$q_c$ of quadrilateral \quab~are in the interior of two adjacent
triangular faces incident to vertex $v_0$, and two of vertices
$q_a$, $q_b$ and $q_c$, say w.l.o.g., $q_b$ and $q_c$, lie in the
same triangular face (see Figure~\ref{fig:q4internal-7}). Then,
vertex $q_d$, as in the previous case, can be connected to vertex
$v_0$ using the ``last'' available edge of quadrilateral
\quaa~(refer to the dotted edge of Figure~\ref{fig:q4internal-7}).
However, in this case, there exists a vertex of quadrilateral \quaa,
say $q$, that it is not identified with any of the vertices of the
face that accommodates vertices $q_a$, $q_b$ and $q_c$, which has to
be connected to one of the vertices $q_a$, $q_b$ or $q_c$. However,
this cannot be accomplished, since an edge from either vertex $q_a$,
or $q_b$, or $q_c$, to vertex $q$ would cross more than one
non-parallel edges (refer to the dashed edge of
Figure~\ref{fig:q4internal-7}).

\begin{figure}[h!tb]
  \centering
  \begin{minipage}[b]{.28\textwidth}
     \centering
     \subfloat[\label{fig:q4internal-5}{}]
     {\includegraphics[width=\textwidth]{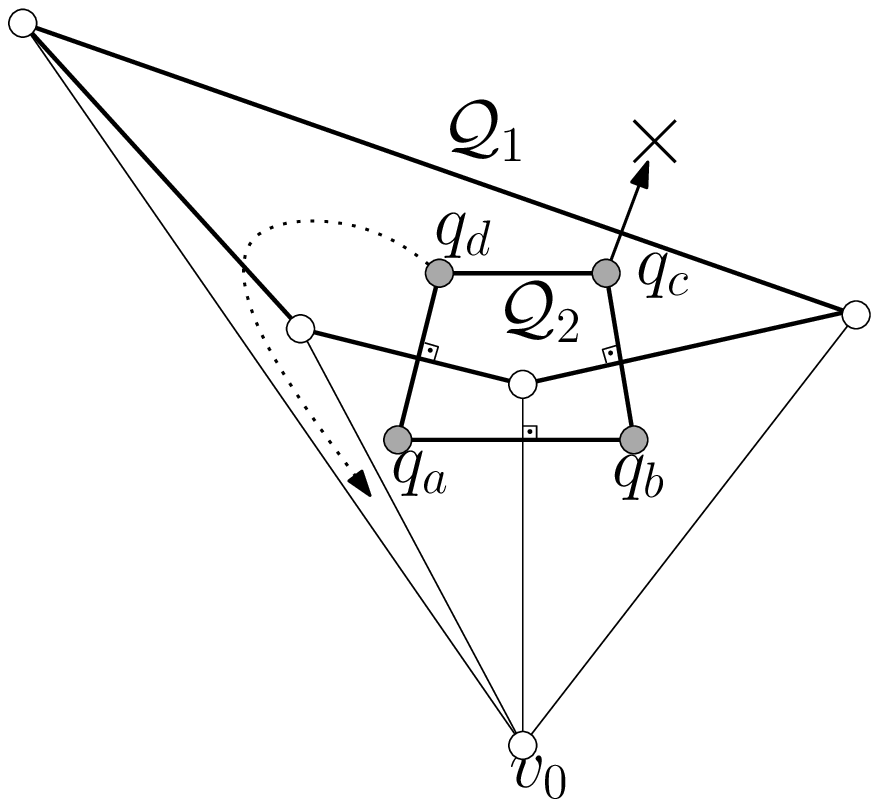}}
   \end{minipage}
   \hfill
  \begin{minipage}[b]{.29\textwidth}
     \centering
     \subfloat[\label{fig:q4internal-7}{}]
     {\includegraphics[width=\textwidth]{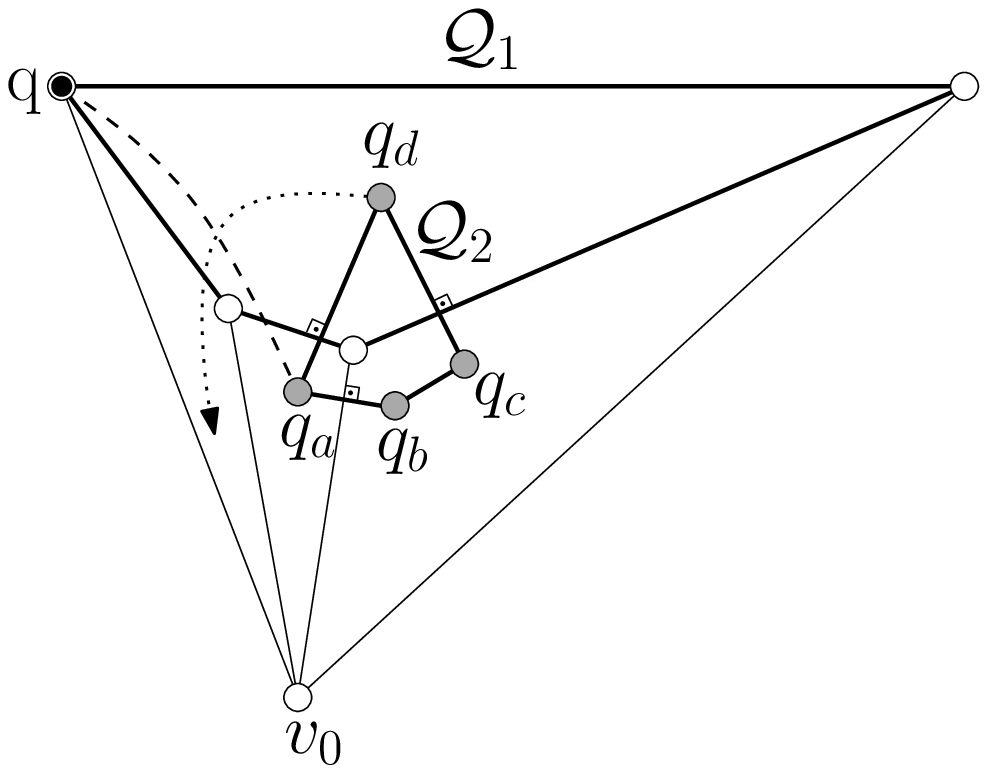}}
  \end{minipage}
  \hfill
  \begin{minipage}[b]{.39\textwidth}
     \centering
     \subfloat[\label{fig:q4internal-6}{}]
     {\includegraphics[width=\textwidth]{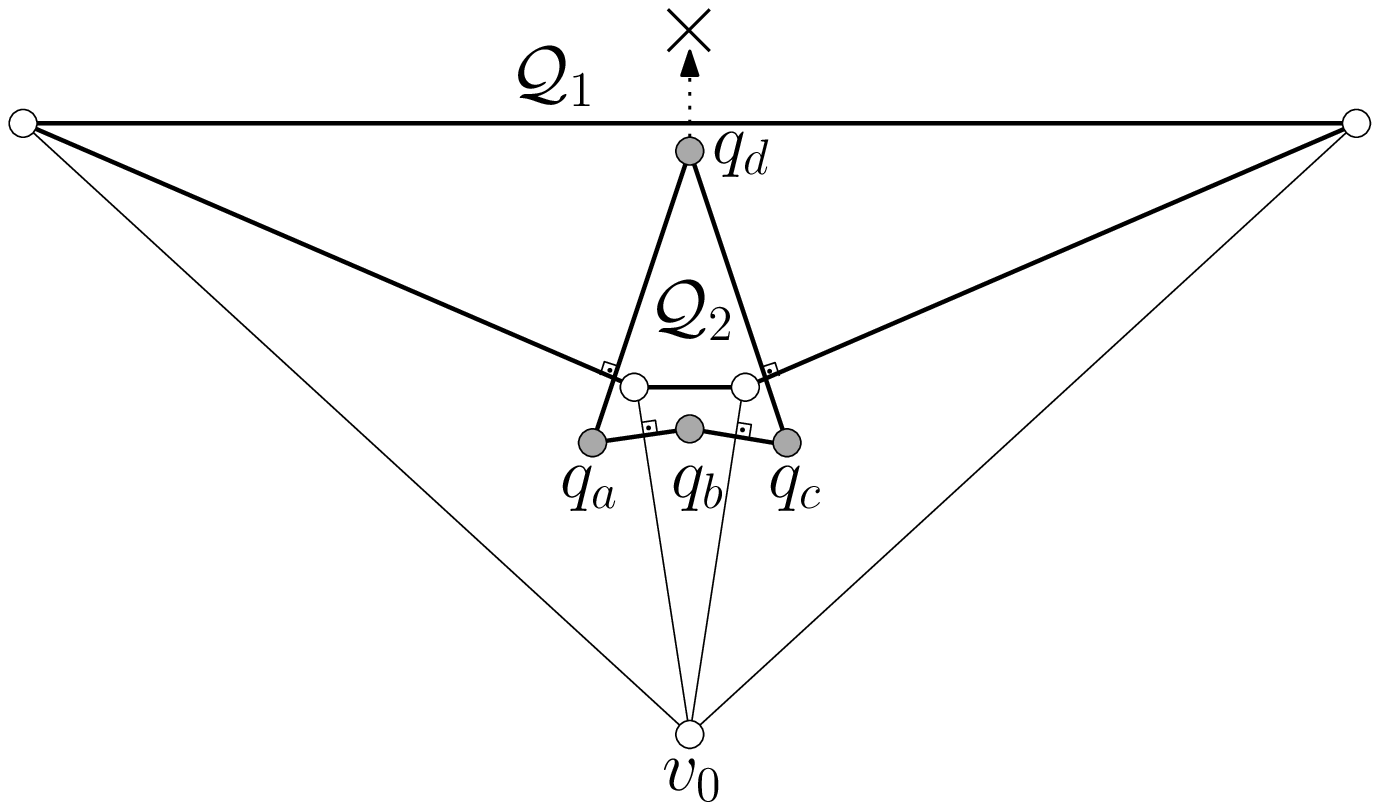}}
  \end{minipage}

  \caption{(a)~Vertices $q_a$ and $q_b$ are in the interior of two adjacent triangular faces incident to vertex $v_0$. Vertices $q_a$ and $q_b$ are not in the same
  triangular face incident to vertex $v_0$. (b)~Vertices $q_a$, $q_b$ and $q_c$ are in the interior of two adjacent triangular faces incident to vertex $v_0$, and $q_b$
     and $q_c$ lie in the same triangular face. (c)~Vertices $q_a$, $q_b$ and $q_c$ are in the interior of three pairwise-adjacent triangular faces incident to vertex $v_0$.}
  \label{fig:q4internal-d}
\end{figure}

The last case we have to consider is the one where three vertices,
say $q_a$, $q_b$ and $q_c$ of quadrilateral \quab~are in the
interior of three pairwise-adjacent triangular faces incident to
vertex $v_0$, whereas the fourth vertex $q_d$ resides within
quadrilateral \quab~(see Figure~\ref{fig:q4internal-6}). In this
case, vertex $q_d$ has to use the fourth edge of quadrilateral
\quaa~to reach vertex $v_0$, which leads to a contradiction, since
only three of the edges of quadrilateral \quaa~can be used to
realize connections from the vertices that lie within quadrilateral
\quaa, to vertex $v_0$.

\end{description}
Thus, we have considered all possible placements of \quab, with
vertex $v_0$ outside of \quaa, and are all led to a contradiction.
We conclude that vertex $v_0$ is in the interior of quadrilateral
\quaa~(and symmetrically in the interior of \quab, too). \qed
\end{proof}

\begin{lemma}
There does not exist a RAC drawing of the augmented square antiprism
graph where an edge emanating from vertex $v_0$ towards a vertex of
quadrilateral \quai, $i=1,2$, crosses quadrilateral \quai.
\label{lem:v0internalnocross}
\end{lemma}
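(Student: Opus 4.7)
The plan is to argue by contradiction and to reuse the local regional analysis already developed in the proof of Lemma~\ref{lem:v0nocross}. Suppose, toward a contradiction, that some edge $(v_0,v_a)$ with $v_a$ a vertex of \quai~crosses an edge of \quai. By Lemma~\ref{lem:p6Planar}, \quai~is drawn planar, so the crossed edge must share no endpoint with $v_a$; labelling the vertices of \quai~consecutively as $v_a,v_b,v_c,v_d$, I would assume without loss of generality that $(v_0,v_a)$ perpendicularly crosses $(v_b,v_c)$. By Lemma~\ref{lem:v0internal}, $v_0$ lies in the interior of \quai, so the setting differs from that of Lemma~\ref{lem:v0nocross} only in the global position of $v_0$ with respect to \quai.

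Next I would focus on the common neighbor $v_{bc}$ of $v_b$ and $v_c$ that belongs to the other quadrilateral of the augmented antiprism graph; this vertex is adjacent to $v_0$, $v_b$, and $v_c$, and has two further neighbors in its own quadrilateral. Since $(v_0,v_a)$ and $(v_b,v_c)$ are already involved in a right-angle crossing, Property~\ref{prp:three-crossing-edges} prohibits every edge incident to $v_{bc}$ from crossing either of them. This is exactly the local configuration dissected in the proof of Lemma~\ref{lem:v0nocross}: around the crossing point of $(v_0,v_a)$ with $(v_b,v_c)$, together with the incident edges $(v_0,v_b)$ and $(v_0,v_c)$, the plane decomposes into three admissible regions $R_1,R_2,R_3$ analogous to those of Figure~\ref{fig:v0nocross-regions}, and placing $v_{bc}$ anywhere outside of them immediately forces a third mutually-crossing edge or a non-right-angle crossing.

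From this point on, my argument would mimic Cases~i and~ii of Lemma~\ref{lem:v0nocross} almost verbatim. When $v_{bc}\in R_1$, the common neighbor $v_{cd}$ of $v_c$ and $v_d$ in the other quadrilateral cannot be connected to $v_0$ without overlapping an edge already incident to $v_d$, a contradiction. When $v_{bc}\in R_2$ or $R_3$, the triangular face $T_{v_{bc}}$ formed by $v_b$, $v_c$, $v_{bc}$ can accommodate at most the vertex $v_a$, and counting over the five required neighbors of $v_a$ shows that at least one of them is forced into a region from which no right-angle connection to $v_0$ can be realized, again contradicting either Property~\ref{prp:three-crossing-edges} or Property~\ref{prp:triangle-edges}.

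The only real obstacle compared to Lemma~\ref{lem:v0nocross} is that, with $v_0$ now inside \quai, the edges $(v_0,v_b)$, $(v_0,v_c)$, $(v_0,v_d)$ leave $v_0$ on a different side of the crossed pair than in the exterior configuration considered there. Since, however, the regional analysis is purely local to the right-angle crossing of $(v_0,v_a)$ with $(v_b,v_c)$ and to the two incident edges $(v_0,v_b)$ and $(v_0,v_c)$, this global change amounts only to a reflection of the figure, and the same three admissible regions, together with the same case-by-case contradictions, still apply.
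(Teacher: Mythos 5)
Your setup is fine (contradiction, the crossed edge is necessarily one of the two edges of \quai~not incident to $v_a$, and the focus on the common neighbor of the crossed edge's endpoints), but the core of your argument rests on an assertion that does not hold up: that the $R_1,R_2,R_3$ analysis of Lemma~\ref{lem:v0nocross} carries over ``almost verbatim'' because the change is ``only a reflection of the figure.'' The two configurations are not related by a reflection. In Lemma~\ref{lem:v0nocross}, $v_0$ is exterior to \qua~and the vertex $v_d$ is unconstrained a priori; Case~i there hinges on forcing $v_d$ onto the semi-line $\ell_{v_0,v_c,v_{bc}}'$ precisely because $(v_0,v_d)$ must perpendicularly cross an edge of $T_{v_{bc}}$, and Case~ii hinges on counting where the five neighbors of $v_a$ can go relative to the \emph{exterior} drawing already built around the crossing. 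In the present lemma, by Lemmata~\ref{lem:p6Planar} and \ref{lem:v0internal}, $v_0$ lies in the interior of the planar quadrilateral \quai, which must then be non-convex for the assumed crossing to exist; $v_d$ is now a boundary vertex of a simple quadrilateral enclosing $v_0$, so $(v_0,v_d)$ need not cross $T_{v_{bc}}$ at all, and the regions available to $v_a$'s neighbors are bounded differently. Neither of the two case-closing contradictions of Lemma~\ref{lem:v0nocross} is available without being re-derived in the new geometry, and you do not re-derive them.

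For comparison, the paper does not attempt this transfer. It argues directly: taking the crossed edge to be $(v_c,v_d)$, it pins the common neighbor $v_{cd}$ of $v_c$ and $v_d$ into a single narrow triangular region along $\ell_{v_d,v_c,v_b}$ by excluding, one by one, the regions determined by $\ell_{v_c,v_d}$, $\ell_{v_d,v_a}'$ and the edges $(v_0,v_b)$, $(v_0,v_d)$; it then pins $v_{ad}$ similarly and observes that the edge $(v_{ad},v_{cd})$ would have to cross $(v_0,v_b)$, which is already involved in a right-angle crossing with $(v_d,v_{cd})$. If you want to salvage your route, you would need to redo the regional decomposition around the crossing with $v_0$ interior to \quai~and verify each exclusion from scratch; as written, the decisive step is missing.
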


\begin{proof}
By Lemma~\ref{lem:v0internal}, vertex $v_0$ should lie in the
interior of quadrilateral \quai, $i=1,2$, which is drawn planar due
to Lemma~\ref{lem:p6Planar}. Assume to the contrary that in a RAC
drawing of the augmented square antiprism graph, an edge emanating
from vertex $v_0$ towards a vertex of quadrilateral \quaa, say
$v_a$, crosses an edge, say $(v_c,v_d)$, of quadrilateral \quaa~(see
Figure~\ref{fig:planar-convex}). Consider vertex $v_{cd}$, which is
incident to vertices $v_c$ and $v_d$ of quadrilateral \quab. Vertex
$v_{cd}$ cannot lie ``above'' line $\ell_{v_c,v_d}$ and to the
``left'' of semi-line $\ell_{v_d,v_a}'$, since it cannot be
connected to vertex $v_0$. In addition, it cannot lie ``above'' line
$\ell_{v_c,v_d}$ and to the ``right'' of semi-line
$\ell_{v_d,v_a}'$, since in this case it cannot be connected to
either vertex $v_c$ or $v_0$. Furthermore, vertex $v_{cd}$ cannot be
in the interior of the triangle formed by vertices $v_0$, $v_c$ and
$v_d$, as it would not be feasible to be connected either to vertex
$v_c$ or $v_d$, since in either case, it crosses edge $(v_0,v_a)$.
Also, $v_{cd}$ cannot be in the region formed by line
$\ell_{v_c,v_d}$ and edges $(v_0,v_d)$ and $(v_0,v_b)$, as it could
not be connected to vertex $v_c$. Thus, vertex $v_{cd}$ should lie
in the light-gray triangular face of Figure~\ref{fig:planar-convex},
along semi-line $\ell_{v_d,v_c,v_b}$. Following a similar reasoning
scheme, we can prove that vertex $v_{ad}$, which is incident to
vertices $v_a$, $v_d$ of quadrilateral \quaa~and vertex $v_{ab}$ of
quadrilateral \quab, due to its adjacency with $v_a$, $v_d$, can lie
in the face formed by vertices $v_a$, $v_b$, $v_d$ and the
intersection point of edges $(v_d,v_{cd})$ and $(v_0,v_b)$. However,
under this restriction, vertex $v_{ad}$ cannot be connected to
vertex $v_{cd}$, without crossing edge $(v_0,v_b)$, which is already
involved in a right-angle crossing (refer to the dashed edge of
Figure~\ref{fig:planar-convex}). \qed
\end{proof}

 \begin{figure}[h!tb]
  \centering
  \includegraphics[width=.7\textwidth]{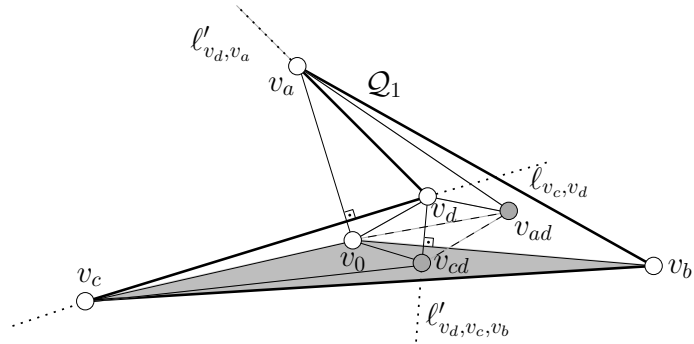}
  \caption{An edge emanating from vertex $v_0$ towards a vertex of \quaa, cannot cross
  \quaa.}
   \label{fig:planar-convex}
\end{figure}

\begin{lemma}
There does not exist a RAC drawing of the augmented square antiprism
graph in which quadrilateral \quaa~intersects \quab.
\label{lem:quahexposition}
\end{lemma}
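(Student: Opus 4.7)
The plan is to assume for contradiction that in some RAC drawing of the augmented square antiprism graph the two quadrilaterals \quaa\ and \quab\ cross. From the earlier lemmas I already know that each \quai\ is drawn planar (Lemma~\ref{lem:p6Planar}), that $v_0$ lies in its interior (Lemma~\ref{lem:v0internal}), and that no edge $(v_0,q)$ with $q$ a vertex of \quai\ can cross \quai\ itself (Lemma~\ref{lem:v0internalnocross}). Since both quadrilaterals are simple closed curves sharing the interior point $v_0$, their crossing set is even, at least one vertex of \quab\ lies exterior to \quaa, and symmetrically at least one vertex of \quaa\ lies exterior to \quab.

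I would organise the argument by case analysis on the number $k\in\{1,2,3\}$ of vertices of \quab\ lying outside \quaa, further splitting the case $k=2$ according to whether the two exterior vertices are adjacent or diagonal along \quab. The central observation in each case is that every exterior vertex $w$ of \quab\ is still incident to the central vertex $v_0$, which sits inside \quaa, so the straight segment $(v_0,w)$ must cross \quaa\ perpendicularly. By Property~\ref{prp:three-crossing-edges}, $(v_0,w)$ can cross at most two sides of \quaa, and any two such sides would have to be mutually parallel, hence opposite sides of \quaa. This rigidity confines $w$ to a narrow angular strip and, together with the analogous analysis for the edges $(v_a,w)$ and $(v_b,w)$ connecting $w$ to its two consecutive \quaa-neighbours $v_a$ and $v_b$, effectively forces a prescribed pairing between exterior vertices of \quab\ and sides of \quaa.

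With these placements in hand, I would derive a contradiction by bookkeeping right-angle crossings on the sides of \quaa: each side already involved in a right-angle crossing cannot be crossed by a third edge (again by Property~\ref{prp:three-crossing-edges}). In parallel, for vertices of \quab\ that remain inside \quaa, I would reuse the triangular-face arguments from the proof of Lemma~\ref{lem:v0internal}, based on Property~\ref{prp:triangle-edges}, to rule out placements in which an interior vertex of \quab\ is separated from one of its required \quaa-neighbours by a side of a triangle $T_{v_0}$ formed by $v_0$ and two consecutive vertices of \quaa. The combination of these two restrictions should exhaust the routing options for the five edges incident to each vertex of \quab.

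The hard step, I expect, will be the sub-case $k=2$ with the two exterior vertices adjacent along \quab: the edge of \quab\ joining them runs entirely outside \quaa\ while its endpoints still have to perpendicularly reach $v_0$ across \quaa\ and still have to realise edges to their mandatory \quaa-neighbours, so the counting of how many perpendicular crossings each side of \quaa\ can accommodate is tightest there. In every sub-case, however, the contradiction should be reached in the same stylised way as in Lemma~\ref{lem:v0internal}: some mandatory incidence of the augmented square antiprism graph cannot be realised without either producing three mutually crossing edges, re-crossing an edge already involved in a right-angle crossing, or violating Property~\ref{prp:triangle-edges} for a triangular region cut out by $v_0$ and the sides of \quaa.
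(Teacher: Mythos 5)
Your overall strategy --- a case analysis on how the vertices of \quab~are distributed relative to \quaa, combined with counting which edges can still accommodate a perpendicular crossing and with the triangular-face arguments based on Properties~\ref{prp:three-crossing-edges} and~\ref{prp:triangle-edges} --- matches the paper's proof in spirit for the cases you list. However, your case enumeration has a genuine gap: restricting $k$, the number of vertices of \quab~outside \quaa, to $\{1,2,3\}$ omits the configuration in which the two quadrilaterals cross while \emph{no} vertex of \quab~lies in the interior of \quaa~(two elongated quadrilaterals overlapping like a plus sign, both containing $v_0$). Your justification for excluding it --- that at least one vertex of \quab~lies exterior to \quaa~and symmetrically for \quaa~--- does not rule this out, since in that configuration all four vertices of \quab~are exterior to \quaa~and vice versa, and no relabelling of the two quadrilaterals brings you back into $k\in\{1,2,3\}$. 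This is precisely the first case the paper treats, and it needs its own argument: an edge $e_q$ of \quab~that enters and leaves \quaa~must cross two parallel edges of \quaa~perpendicularly, hence $e_q$ separates the interior of \quaa~into two regions; $v_0$ then lies in one of them while some vertex of \quaa~lies in the other, and the edge joining them to $v_0$ cannot cross $e_q$ at a right angle.

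A second, more local problem: you invoke Property~\ref{prp:three-crossing-edges} to bound the number of sides of \quaa~crossed by an edge $(v_0,w)$, but the sides of the quadrilateral \quaa~(drawn planar by Lemma~\ref{lem:p6Planar}) do not cross one another, so no three mutually crossing edges arise and the property gives nothing here. What you actually need is the parity argument (with $v_0$ inside and $w$ outside, the segment $(v_0,w)$ crosses the boundary of \quaa~an odd number of times) together with the observation that two perpendicular crossings would force two sides of \quaa~to be parallel and three would force three sides to be pairwise parallel, which is impossible for a non-degenerate quadrilateral; hence $(v_0,w)$ crosses exactly one side. With that repair and the missing case added, the remaining bookkeeping you sketch is essentially the paper's argument for its cases of one, two, or three vertices of \quab~inside \quaa.
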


\begin{proof}
From Lemmata~\ref{lem:p6Planar}, \ref{lem:v0internal} and
\ref{lem:v0internalnocross}, it follows that the graph induced by
quadrilateral \quaa~and vertex $v_0$ is drawn planar with vertex
$v_0$ in the interior of both quadrilaterals \quaa~and \quab.
Therefore, it should resemble the one illustrated in
Figure~\ref{fig:v0internalq4external-1}.

 \begin{figure}[h!tb]
  \centering
  \begin{minipage}[b]{.48\textwidth}
     \centering
     \subfloat[\label{fig:v0internalq4external-1}{}]
     {\includegraphics[width=.7\textwidth]{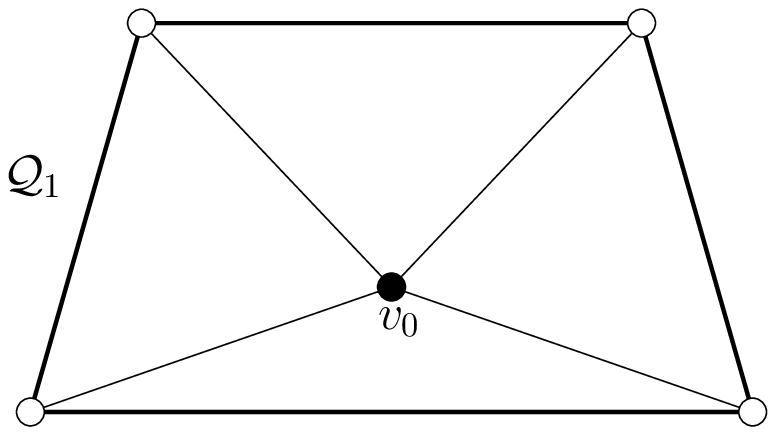}}
  \end{minipage}
   \hfill
  \begin{minipage}[b]{.48\textwidth}
     \centering
     \subfloat[\label{fig:v0internalq4external-6}{}]
     {\includegraphics[width=.7\textwidth]{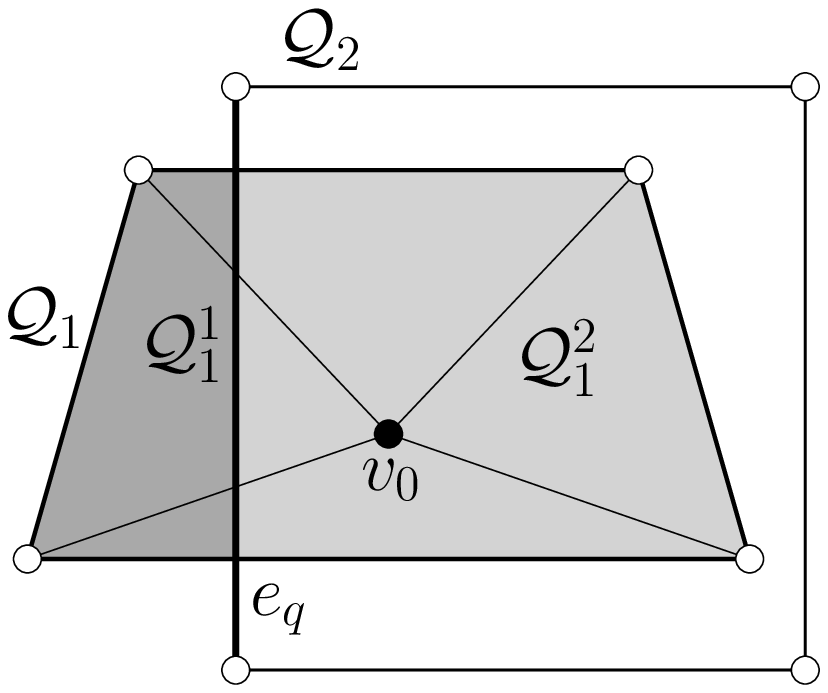}}
   \end{minipage}

  \caption{(a)~The graph induced by quadrilateral \quaa~and vertex $v_0$ is drawn
   planar with vertex $v_0$ in the interior of quadrilateral \quaa. (b)~\quaa~and \quab~cross and none of the vertices of \quab~is in the interior of \quaa.}
  \label{fig:v0internalq4external-a}
\end{figure}

In order to prove this lemma, we will contradict the following
cases: (i)~\quaa~and \quab~cross and none of the vertices of
\quab~is in the interior of quadrilateral \quaa, (ii)~two vertices
of \quab~lie in the interior of \quaa~and \quab~crosses either a
single edge of \quaa, or two edges of \quaa, (iii)~three vertices of
\quab~lie in the interior of \quaa, (iv)~only one vertex of
\quab~lies in the interior of \quaa. We first assume that
quadrilateral \quaa~and quadrilateral \quab~cross and none of the
vertices of quadrilateral \quab~is in the interior of quadrilateral
\quaa~(see Figure~\ref{fig:v0internalq4external-6}). In this case,
an edge of quadrilateral \quab, say $e_q$, which is involved in the
crossing, divides quadrilateral \quaa~into two regions, say
\quaaa~and \quaab. Obviously, edge $e_q$ should cross parallel edges
of quadrilateral \quaa. Then, vertex $v_0$, which lies in the
interior of quadrilateral \quaa~and is incident to all vertices of
quadrilateral \quaa~cannot reside to none of \quaaa~and \quaab,
without introducing a non-right angle crossing with edge $e_q$.

We proceed to consider the case where quadrilaterals \quaa~and
\quab~cross and some of the vertices of quadrilateral \quab~are in
the interior of  quadrilateral \quaa, whereas the remaining ones on
its exterior. Let $q_a$, $q_b$, $q_c$ and $q_d$ be the vertices of
quadrilateral \quab. Assume that $q_a$ and $q_b$ lie within
quadrilateral \quaa, whereas $q_c$ and $q_d$ on its external face,
such that edges $(q_a,q_d)$ and $(q_b,q_c)$ are perpendicular either
to one edge of quadrilateral \quaa~(see
Figure~\ref{fig:v0internalq4external-2}), or to two edges of
\quaa~(see Figure~\ref{fig:v0internalq4external-3}). Note that edges
$(q_a,q_d)$ and $(q_b,q_c)$ cannot be crossed by any other edge
incident to both quadrilaterals, since they are already involved in
right-angle crossings. However, all vertices of quadrilateral
\quaa~have to be connected to vertex $v_0$. Assuming that one vertex
of quadrilateral \quaa~can utilize the ``last'' available edge of
quadrilateral \quab~(i.e., edge $(q_a,q_b)$) to reach vertex $v_0$
(refer to the dotted edges of
Figures~\ref{fig:v0internalq4external-2}~and~\ref{fig:v0internalq4external-3}),
there exists at least one vertex of \quaa, say vertex $q$, that
cannot be connected to $v_0$, without introducing non right-angle
crossing (refer to the dashed edges of
Figures~\ref{fig:v0internalq4external-2}~and~\ref{fig:v0internalq4external-3}).

\begin{figure}[h!tb]
  \centering
  \begin{minipage}[b]{.48\textwidth}
     \centering
     \subfloat[\label{fig:v0internalq4external-2}{}]
     {\includegraphics[width=.9\textwidth]{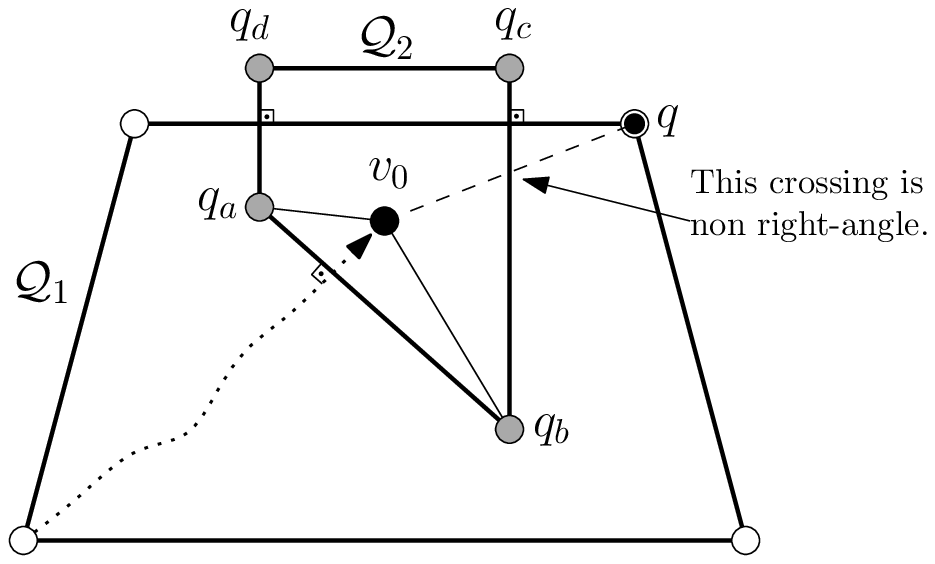}}
   \end{minipage}
  \hfill
  \begin{minipage}[b]{.48\textwidth}
     \centering
     \subfloat[\label{fig:v0internalq4external-3}{}]
     {\includegraphics[width=.7\textwidth]{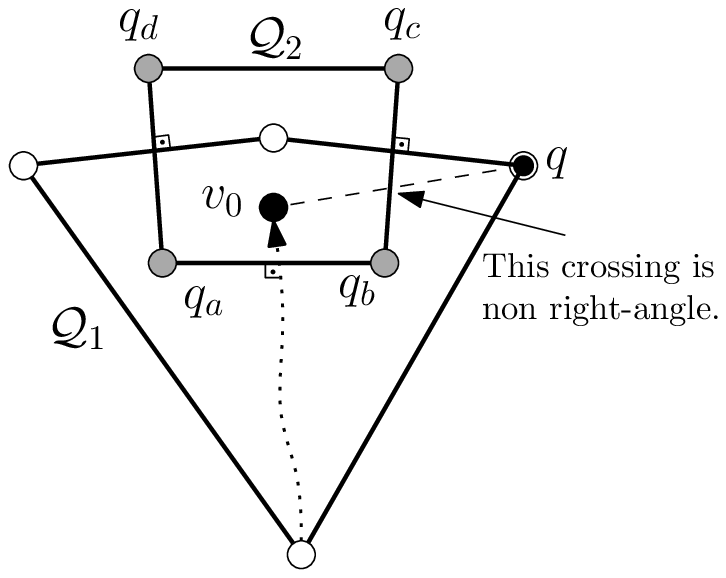}}
  \end{minipage}
  \caption{Vertices $q_a$ and $q_b$ are in the interior of \quaa~and \quab~crosses
  (i)~a single edge of \quaa, or (ii)~two edges of \quaa.}
  \label{fig:v0internalq4external-a}
\end{figure}

Following a similar reasoning scheme as for the previous cases, we
can prove that the cases where (i)~three vertices of \quab, say
w.l.o.g., $q_a$, $q_b$ and $q_c$, lie in the interior of \quaa~(see
Figure~\ref{fig:v0internalq4external-8}) , and (ii)~only one vertex
of \quab, say w.l.o.g, vertex $q_b$, lies in the interior of
\quaa~(see Figure~\ref{fig:v0internalq4external-9}), are led to a
contradiction. \qed

\begin{figure}[h!tb]
  \centering
  \begin{minipage}[b]{.48\textwidth}
     \centering
     \subfloat[\label{fig:v0internalq4external-8}{}]
     {\includegraphics[width=.7\textwidth]{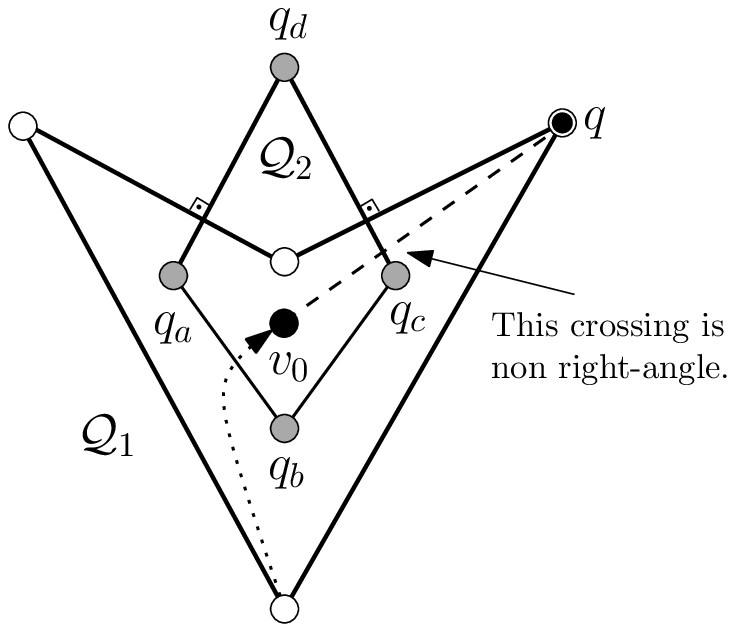}}
   \end{minipage}
  \hfill
  \begin{minipage}[b]{.48\textwidth}
     \centering
     \subfloat[\label{fig:v0internalq4external-9}{}]
     {\includegraphics[width=.6\textwidth]{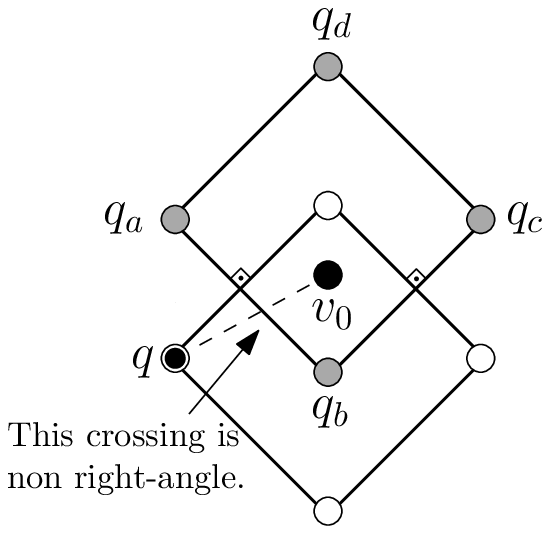}}
  \end{minipage}
  \caption{(i)~Vertices $q_a$, $q_b$ and $q_c$ are in the interior of \quaa. (ii)~Vertex $q_b$ is in the interior of \quaa.}
  \label{fig:v0internalq4external-a}
\end{figure}

\end{proof}

%============================================================================
% End of Proofs of lemmas
%============================================================================

\begin{theorem}
\label{thm:uniqueness} Any straight-line RAC drawing of the
augmented square antiprism graph has two combinatorial embeddings.
\end{theorem}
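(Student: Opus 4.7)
The plan is to show that Lemmas~\ref{lem:p6Planar}--\ref{lem:quahexposition} collectively force the combinatorial structure of the drawing up to a single binary choice, so exactly two rotation systems around $v_0$ are realizable.

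\emph{Step 1 (nesting).} By Lemma~\ref{lem:p6Planar} each of \quaa{} and \quab{} is drawn planar, by Lemma~\ref{lem:v0internal} the central vertex $v_0$ lies in the interior of both, and by Lemma~\ref{lem:quahexposition} the two quadrilaterals do not cross. Two disjoint simple closed curves in the plane whose interiors share the point $v_0$ must be nested, so one of the two, say without loss of generality \quab{}, lies in the bounded region of the other. By Lemma~\ref{lem:v0internalnocross} the four edges from $v_0$ to vertices of \quaa{} do not cross \quaa{}, so each of them reaches its endpoint by crossing \quab{} transversally, at a right angle.

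\emph{Step 2 (fixing the cyclic orders).} The planar drawings of \quaa{} and \quab{} (each enclosing $v_0$) determine the cyclic orders of their respective vertices around $v_0$, up to a global reflection. Label them $a_1,\ldots,a_4$ and $b_1,\ldots,b_4$ in these cyclic orders. The only remaining degree of freedom in the rotation system at $v_0$ is the interleaving of the $b_j$'s with the $a_i$'s.

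\emph{Step 3 (forced interleaving).} The antiprism adjacency makes every $a_i$ a neighbor of exactly two consecutive $b_j$'s, and every $b_j$ a neighbor of exactly two consecutive $a_i$'s. I would argue that the unique $b_j$ lying in the angular wedge at $v_0$ bounded by the spokes $(v_0,a_i)$ and $(v_0,a_{i+1})$ must be precisely the common $b$-neighbor of $a_i$ and $a_{i+1}$. Any other assignment either places two $b$-vertices in one wedge (forcing an additional crossing on a spoke already perpendicular to an edge of \quab{}, and thereby contradicting Property~\ref{prp:three-crossing-edges}) or leaves a wedge empty, so that the edge from an $a$-vertex to its $b$-neighbor on the opposite side must cross \quaa{}, contradicting Lemma~\ref{lem:v0internalnocross}. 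Thus the rotation system at $v_0$ is determined up to the single global handedness choice, realized by the two embeddings depicted in Figures~\ref{fig:basic-gadget-2} and~\ref{fig:basic-gadget-2_2}.

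The main obstacle is Step~3: one must enumerate the placements of the $b_j$'s relative to the spokes from $v_0$ and, in each non-alternating case, exhibit a concrete obstruction (three mutually crossing edges via Property~\ref{prp:three-crossing-edges}, a triangle-edges violation via Property~\ref{prp:triangle-edges}, or an edge forced to cross another that is already involved in a right-angle crossing). The recurring mechanism is the same as in Lemmata~\ref{lem:v0nocross}--\ref{lem:quahexposition}: once a spoke from $v_0$ has been used in a right-angle crossing, no further edge of the drawing may cross it, which sharply limits where the remaining $b$-vertices can possibly be placed.
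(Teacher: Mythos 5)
Your proposal follows essentially the same route as the paper: Lemmata~\ref{lem:p6Planar}--\ref{lem:quahexposition} force the two quadrilaterals to be planar, non-crossing and nested around $v_0$, and the four spokes from $v_0$ to the outer quadrilateral must perpendicularly cross the four edges of the inner one, pinning down the embedding up to one binary choice; your Step~3 merely spells out the interleaving argument that the paper dismisses as ``trivial.'' The only (minor) divergence is in bookkeeping: the paper attributes the two embeddings to the choice of which quadrilateral is drawn inside the other, whereas you attribute them to a global handedness after fixing the nesting without loss of generality.
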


\begin{proof}
So far, we have managed to prove that both quadrilaterals \quaa~and
\quab~are drawn planar, do not cross, and have central vertex $v_0$
to their interior. This suggests that either quadrilateral \quaa~is
in the interior of \quab, or quadrilateral \quab~is in the interior
of \quaa. However, in both cases, vertex $v_0$, which has to be
connected to the four vertices of the ``external'' quadrilateral,
should inevitably perpendicularly cross the four edges of the
``internal'' quadrilateral, and this trivially implies only two
feasible combinatorial embeddings. \qed
\end{proof}

We extend the augmented square antiprism graph, by appropriately
``glueing'' multiple instances of it, the one next to the other,
either horizontally or vertically.
Figure~\ref{fig:basic-gadget-extension} demonstrates how a
horizontal extension of two instances, say $G$ and $G'$, is
realized, i.e., by identifying two ``external'' vertices, say $v$
and $v'$, of $G$ with two ``external'' vertices of $G'$ (refer to
the gray-colored vertices of
Figure~\ref{fig:basic-gadget-extension}), and by employing an
additional edge (refer to the dashed drawn edge of
Figure~\ref{fig:basic-gadget-extension}), which connects an
``internal'' vertex, say $u$, of $G$ with the corresponding
``internal'' vertex, say $u'$, of $G'$. Let $G \oplus G'$ be the
graph produced by a horizontal or vertical extension of $G$ and
$G'$. Since each of $G$ and $G'$ has two RAC combinatorial
embeddings each, one would expect that $G \oplus G'$ would have four
possible RAC combinatorial embeddings. We will show that this is not
true and, more precisely, that there only exists a single RAC
combinatorial embedding.

\begin{figure}[h!tb]
  \centering
  \begin{minipage}[b]{\textwidth}
     \centering
     \subfloat[\label{fig:basic-gadget-extension}{}]
     {\includegraphics[width=\textwidth]{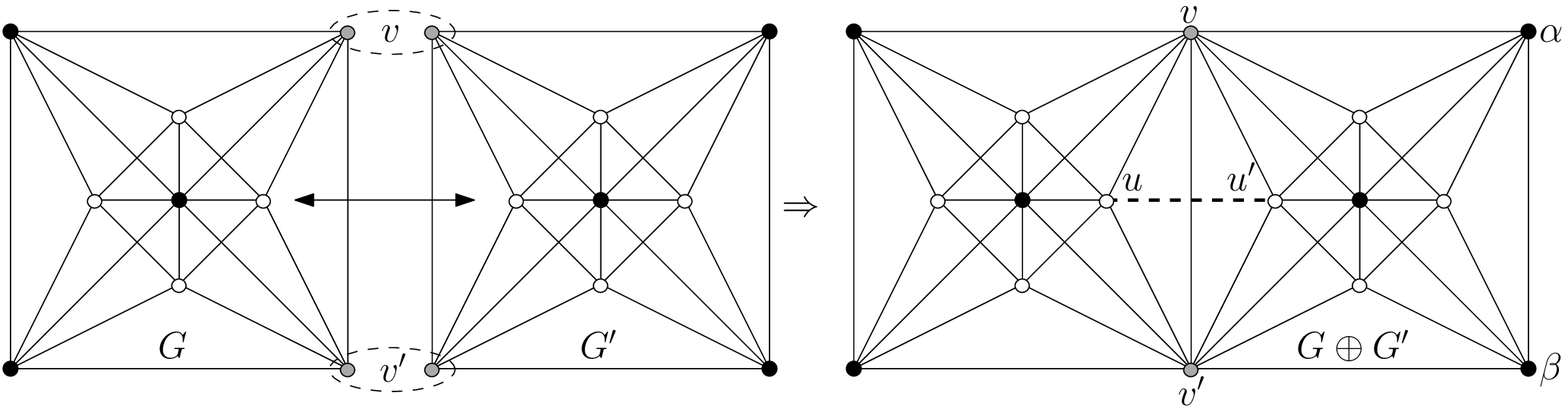}}
  \end{minipage}
  \vfill
  \begin{minipage}[b]{.48\textwidth}
     \centering
     \subfloat[\label{fig:basic-gadget-extension-no-cross}{}]
     {\includegraphics[width=.85\textwidth]{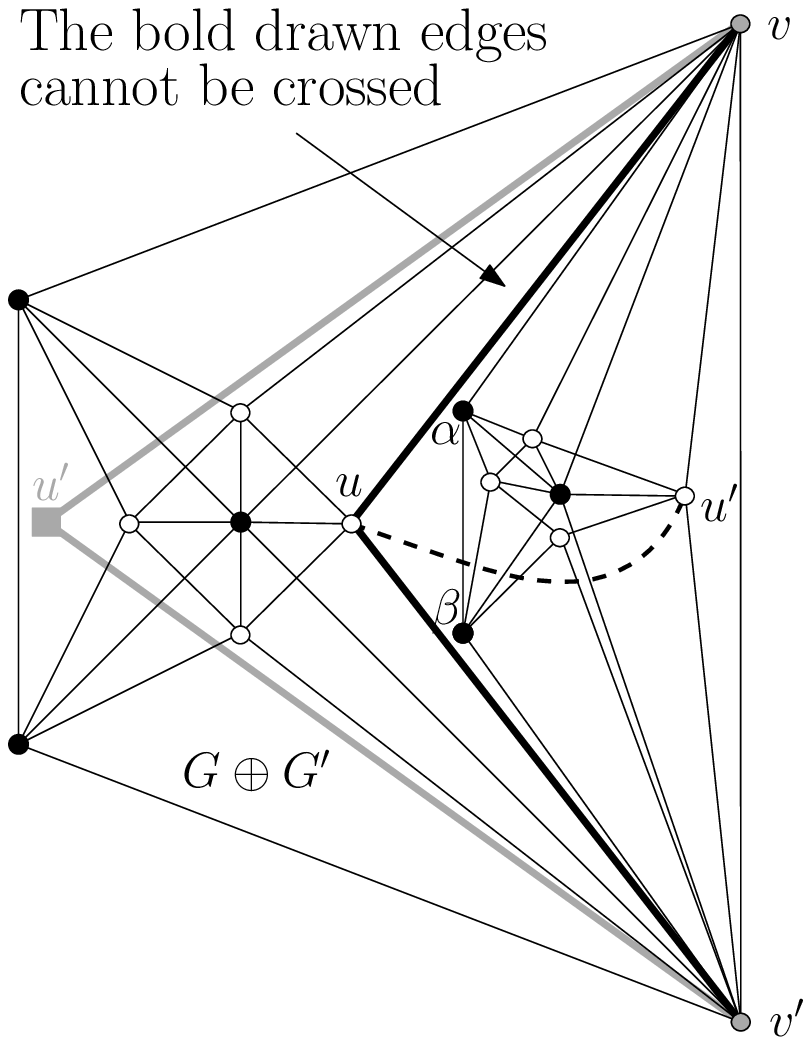}}
  \end{minipage}
  \hfill
  \begin{minipage}[b]{.48\textwidth}
     \begin{minipage}[b]{\textwidth}
        \centering
        \subfloat[\label{fig:basic-gadget-extension-internal}{}]
        {\includegraphics[width=.65\textwidth]{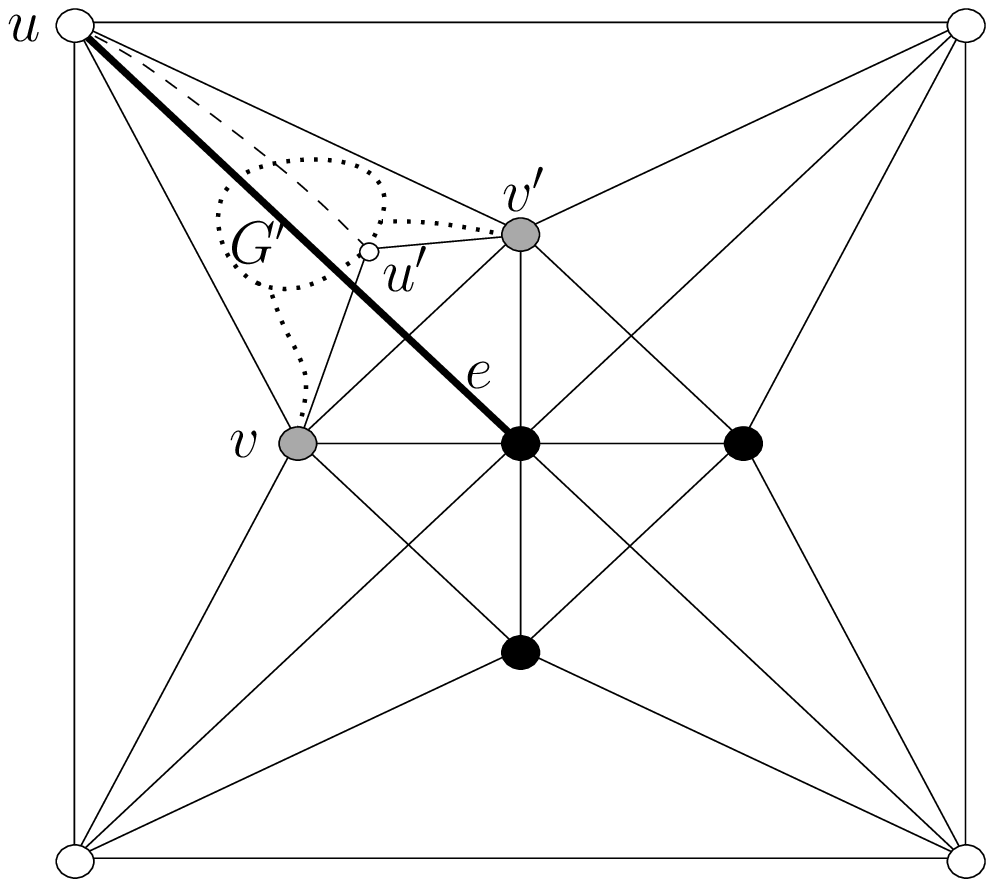}}
     \end{minipage}
     \begin{minipage}[b]{\textwidth}
        \centering
        \subfloat[\label{fig:basic-gadget-turned}{}]
        {\includegraphics[width=.85\textwidth]{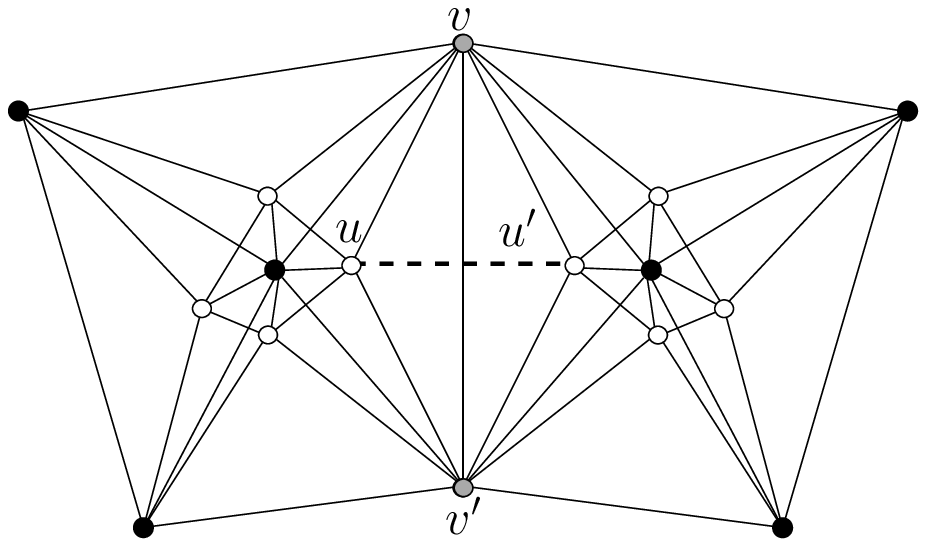}}
     \end{minipage}
  \end{minipage}
  \caption{(a)~Horizontal extension of two instances of the augmented square antiprism graph,
  (b)~The additional (dashed) edge does not permit the second instance to be drawn in the interior of the first one.
  (c)~The vertices which are identified, during a horizontal or vertical extension ($v$ and $v'$ in Figure), should be on the external face of each augmented square antiprism graph.
  (d)~At each extension step the new instance of the augmented square antiprism graph may introduce a ``turn''.}
  \label{fig:basic-gadget-attributes}
\end{figure}

\begin{theorem}
Let $G$ and $G'$ be two instances of the augmented square antiprism
graph. Then, $G \oplus G'$ has a unique RAC combinatorial embedding.
\end{theorem}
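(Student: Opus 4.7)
The plan is to apply Theorem~\ref{thm:uniqueness} to each copy and reduce the naïvely four combinatorial embeddings to one by exploiting the two features that distinguish $G \oplus G'$ from two disjoint copies: the identification of the pair of vertices $v\equiv v'$, and the extra edge $(u,u')$ between the designated interior vertices. Recall that the two RAC embeddings of a single augmented square antiprism graph differ only in which of the quadrilaterals $\mathcal{Q}_1, \mathcal{Q}_2$ bounds the external face while the other is drawn perpendicularly nested inside, around $v_0$. Hence, of the eight non-central vertices of each copy, exactly four (those of the outer quadrilateral) lie on the external face in any fixed embedding.

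First I would establish that the identified vertices $v$ and $v'$ must lie on the external face of their respective instances, which is the content of Figure~\ref{fig:basic-gadget-extension-internal}. Suppose instead that $v$ lies on the inner quadrilateral of the drawing of $G$; then the drawing of $G'$ would have to be attached to $G$ at a point strictly inside the outer quadrilateral of $G$. By adapting the regional arguments of Lemmas~\ref{lem:v0nocross} and \ref{lem:v0internalnocross}, one checks that $G'$ can neither fit entirely inside the outer quadrilateral of $G$ nor cross out of it without creating three mutually crossing edges (forbidden by Property~\ref{prp:three-crossing-edges}) or a non-right-angle crossing; this is the obstruction depicted in Figure~\ref{fig:basic-gadget-extension-no-cross}, where the additional edge $(u,u')$, which must leave the interior of $G$ to reach the interior of $G'$, cannot be routed at right angles. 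By symmetry, $v'$ must be external in $G'$. Once both identified vertices are pinned to the external face, the embedding of each copy is uniquely determined among the two given by Theorem~\ref{thm:uniqueness}: it is the one in which the quadrilateral containing $v$ (respectively $v'$) bounds the outer face.

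Finally, the edge $(u,u')$ fixes the \emph{relative} placement of the two copies. Because $u$ and $u'$ sit in the interiors of the outer quadrilaterals of $G$ and $G'$ respectively, routing $(u,u')$ at a right angle through the shared pair $v\equiv v'$ forces the interiors of $G$ and $G'$ to lie on opposite sides of that pair, which already rules out any nested configuration. A global ``turn'' of $G'$ relative to $G$ of the kind depicted in Figure~\ref{fig:basic-gadget-turned} remains geometrically possible, but such a turn preserves the cyclic order of edges around every vertex — in particular around $v\equiv v'$ — and therefore does not produce a new \emph{combinatorial} embedding. I expect the main obstacle to be the first step: one must extend the regional case analysis of Lemmas~\ref{lem:v0nocross}--\ref{lem:quahexposition} to accommodate the new inter-instance edge $(u,u')$, and carefully verify that every attempt to place $G'$ against an internal vertex of $G$ violates Property~\ref{prp:three-crossing-edges}, Property~\ref{prp:triangle-edges}, or the right-angle requirement.
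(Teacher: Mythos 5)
Your proposal is correct and follows essentially the same route as the paper: both rule out the nested configuration via the extra edge $(u,u')$ (which would have to cross the interior of $G'$), both force the identified vertices $v\equiv v'$ onto the external quadrilateral of each copy by exhibiting an unavoidable non-right-angle crossing, and both observe that a ``turn'' does not alter the combinatorial embedding. The only minor divergence is organizational: the paper pins down the internal-quadrilateral case with the specific edge $e$ that perpendicularly crosses $(v,v')$ from the external quadrilateral, whereas you defer that to ``adapting the regional arguments,'' but the overall argument is the paper's.
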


\begin{proof}
Assume first that in a RAC drawing of $G \oplus G'$, vertices $v$
and $v'$ are on the external quadrilateral of $G$ and graph $G'$ is
drawn completely in the interior of $G$ (see
Figure~\ref{fig:basic-gadget-extension-no-cross}; since $v$ and $v'$
are on the external face of $G'$, vertices $\alpha$ and $\beta$ in
Figure~\ref{fig:basic-gadget-extension-no-cross} should also be on
the external face of $G'$). First observe that vertex $u'$ of $G'$,
which is incident to vertices $v$ and $v'$, cannot reside to the
``left'' of both edges $(u,v)$ and $(u,v')$ (refer to the bold drawn
edges of Figure~\ref{fig:basic-gadget-extension-no-cross}), since
this would lead to a situation where three edges mutually cross and,
subsequently, to a violation of
Property~\ref{prp:three-crossing-edges} (see the gray-colored square
vertex of Figure~\ref{fig:basic-gadget-extension-no-cross}).
Therefore, vertex $u'$ should lie within the triangular face of $G$
formed by vertices $u$, $v$ and $v'$. The same similarly holds for
the central vertex of $G'$, which is also incident to vertices $v$
and $v'$. By Property~\ref{prp:triangle-edges}, any common neighbor
of vertices $u'$ and $v$ should also lie within the same triangular
face of $G$, which progressively implies that entire graph $G'$
should reside within this face, as in
Figure~\ref{fig:basic-gadget-extension-no-cross}. However, in this
case and since $u'$ is incident to $v$ and $v'$, edge $(u,u')$,
which is used on a horizontal or a vertical extension, crosses the
interior of $G'$, which is not permitted. This suggests that graph
$G'$ should be on the exterior of $G$.

Now assume that vertices $v$ and $v'$, which are identified, during
a horizontal or vertical extension, are along the internal
quadrilateral of $G$ in a RAC drawing of $G \oplus G'$. This is
illustrated in Figure~\ref{fig:basic-gadget-extension-internal}.
Then, the edge, say $e$, which perpendicularly crosses edge $(v,v')$
and emanates from the external quadrilateral towards the central
vertex of $G$ (refer to the bold solid edge of
Figure~\ref{fig:basic-gadget-extension-internal}) will be involved
in crossings with $G'$. More precisely, we focus on vertex $u'$ of
$G'$, which is incident to vertices $v$ and $v'$. These edges will
inevitably introduce non-right angle crossings, since one of them
should cross edge $e$. Therefore, the vertices that are identified,
during a horizontal or vertical extension, should always be on the
external face of each augmented square antiprism graph and,
subsequently, the drawing of the graph produced by a horizontal or
vertical extension will resemble the one of
Figure~\ref{fig:basic-gadget-extension}, i.e., it has a unique
embedding. \qed
%However, in the following we
%will show how we can prohibit this bad behavior in our construction.
\end{proof}

Note that the extension which is given in
Figure~\ref{fig:basic-gadget-extension}, is ideal. In the general
case, at each extension step the new instance of the augmented
square antiprism graph may introduce a ``turn'', as in
Figure~\ref{fig:basic-gadget-turned}. We observe that by ``glueing'' a new
instance of the augmented square antiprism graph on $G \oplus G'$ either by
a horizontal or a vertical extension, we obtain another graph of unique RAC combinatorial embedding.
In this way, we can define an infinite class of graphs of unique RAC combinatorial embedding. This is
summarized in the following theorem.

\begin{theorem}
There exists a class of graphs of unique RAC combinatorial embedding.
\end{theorem}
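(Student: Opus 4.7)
The plan is to establish the theorem by induction on the number of augmented square antiprism instances used to build the graph, using the preceding $G \oplus G'$ theorem both as the base case and as a local ``glue step''. Define the class $\mathcal{C}$ recursively: a single augmented square antiprism lies in $\mathcal{C}$, and whenever $G \in \mathcal{C}$, any graph obtained from $G$ by gluing a fresh augmented square antiprism instance $H$ to a boundary instance of $G$ via a horizontal or vertical extension (possibly with a turn as in Figure~\ref{fig:basic-gadget-turned}) also lies in $\mathcal{C}$. The goal is to show that every member of $\mathcal{C}$ built from at least two instances has a unique RAC combinatorial embedding.

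First I would strengthen the inductive statement so the hypothesis carries enough structure through the induction: for $k \geq 2$, every $G \in \mathcal{C}$ built from $k$ instances admits a unique RAC combinatorial embedding in which each constituent instance is drawn with its central vertex in the interior of both of its quadrilaterals (forced by Lemmata~\ref{lem:p6Planar}--\ref{lem:quahexposition}), and distinct instances occupy pairwise interior-disjoint regions of the plane, meeting only along their shared glue vertices and glue edge. The base case $k = 2$ is precisely the preceding theorem on $G \oplus G'$.

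For the inductive step, suppose the statement holds for some $G_k \in \mathcal{C}$ built from $k$ instances and let $G_{k+1} = G_k \oplus H$, where $H$ is glued to a boundary instance $G^{\star}$ of $G_k$ via shared vertices $v, v'$ and connecting edge $(u, u')$. By the inductive hypothesis, $G_k$ has a unique RAC embedding in which $v$ and $v'$ lie on the external face of $G^{\star}$. Now localize the argument of the previous theorem to the subgraph $G^{\star} \oplus H$ sitting inside $G_{k+1}$: the edge $(u,u')$ rules out $H$ being drawn inside $G^{\star}$ (it would pierce the interior of $H$), and the edge emanating from $G^{\star}$'s external quadrilateral perpendicularly through its internal quadrilateral to $G^{\star}$'s central vertex rules out $v, v'$ lying along $G^{\star}$'s internal quadrilateral. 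Combined with the inductive disjointness property, which confines every other instance of $G_k$ to fixed regions exterior to $G^{\star}$ that are separated from the free side where $H$ must live, this pins down the combinatorial position and orientation of $H$ uniquely, and preserves the region-disjointness invariant for $G_{k+1}$.

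The main obstacle will be verifying that no far-away instance of $G_k$ can indirectly constrain $H$, and symmetrically that $H$ cannot disturb the embedding of any non-boundary instance: in short, that the gluing is genuinely local. This should follow because the only edges introduced by the extension are $(u,u')$ together with the two pairs of edges of $H$ incident to $v$ and $v'$, all of which lie in the immediate neighborhood of $G^{\star} \cap H$, so the $G^{\star} \oplus H$ analysis transfers verbatim while the rest of $G_k$'s embedding is untouched. Since this inductive construction produces infinitely many non-isomorphic graphs (e.g., by varying the total number and the sequence of horizontal, vertical, and turning extensions), $\mathcal{C}$ is the desired infinite class of graphs with unique RAC combinatorial embedding.
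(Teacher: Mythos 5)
Your inductive formalization follows exactly the route the paper takes: the class is generated by iterated horizontal/vertical extensions, and uniqueness of the RAC combinatorial embedding is inherited at each gluing step from the preceding theorem on $G \oplus G'$. The paper in fact states this theorem with no proof beyond a one-sentence observation, so your induction --- including the locality issue you flag, which the paper likewise leaves unaddressed --- is a faithful and somewhat more careful rendering of the same intended argument.
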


% ============================================================================
\section{The Straight-Line RAC Drawing Problem is NP-hard}
\label{sec:np}
% ============================================================================

\begin{theorem}
It is $\NP$-hard to decide whether an input graph admits a
straight-line RAC drawing.
\end{theorem}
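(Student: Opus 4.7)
The plan is to give a polynomial-time reduction from $3$-SAT to the straight-line RAC drawing problem. Given a $3$-SAT formula $\varphi$ with variables $x_1,\dots,x_n$ and clauses $c_1,\dots,c_m$, we will construct a graph $G_\varphi$ which admits a straight-line RAC drawing if and only if $\varphi$ is satisfiable. The overall architecture exploits the rigidity results of Section~\ref{sec:racgraphs}: the backbone of $G_\varphi$ will be a large extended augmented square antiprism graph obtained by repeated horizontal and vertical ``glueings.'' By the theorems above, this backbone has a unique RAC combinatorial embedding, so every variable, literal occurrence and clause is placed in a geometrically predictable position that the reduction can exploit.

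First I would define a \emph{variable gadget} $X_i$ for each variable $x_i$. The natural choice is to use a (single) augmented square antiprism graph, since Theorem~\ref{thm:uniqueness} tells us that any RAC drawing of it has exactly two combinatorial embeddings, which are mirror images of each other. One of them will encode $x_i=\mathrm{true}$ and the other $x_i=\mathrm{false}$. Next I would design a \emph{literal wire}: a chain of augmented-square-antiprism instances extended from the variable gadget, which by the second and third theorems of Section~\ref{sec:racgraphs} has a unique embedding and therefore faithfully propagates the ``side'' chosen at the variable gadget to the location where the literal appears in a clause. Wires for $x_i$ and for $\bar x_i$ are obtained by attaching the chain on opposite sides (or through an additional ``flip'' subgadget) so that a true assignment to $x_i$ flips the roles of the two wires.

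The heart of the reduction is the \emph{clause gadget} $C_j$ for each clause $c_j=\ell_{j,1}\vee\ell_{j,2}\vee\ell_{j,3}$. Here I would attach, to the terminal of each incoming literal wire, a small subgraph with the following property: there is a distinguished pair of vertices that can only be placed in a specific local region when the corresponding literal is \emph{true}, and all three literal-subgraphs compete for three regions of which only two can be simultaneously occupied. Equivalently (and this is the clean way to phrase it), the clause gadget admits a RAC drawing if and only if at most two of the literal wires arrive in their ``false'' configuration; i.e., at least one literal is true. The combinatorial constraints will again be derived using Properties~\ref{prp:three-crossing-edges} and~\ref{prp:triangle-edges}, in the spirit of the case analyses of Lemmata~\ref{lem:v0nocross}--\ref{lem:quahexposition}. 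The whole graph $G_\varphi$ has size polynomial in $|\varphi|$, so the reduction is polynomial.

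To finish, I would verify the two implications. If $\varphi$ is satisfied by an assignment $\tau$, we draw the rigid backbone once (it has essentially one drawing up to the choice of embedding at each variable gadget), orient the variable gadget of $x_i$ according to $\tau(x_i)$, propagate through the unique wires, and then in each clause gadget place the ``free'' region at a true literal, yielding a valid straight-line RAC drawing. Conversely, any straight-line RAC drawing of $G_\varphi$ induces an orientation of each variable gadget (by Theorem~\ref{thm:uniqueness}), and the local rigidity at each clause gadget forces one of the three literals to be true, so the induced assignment satisfies $\varphi$. The main obstacle in carrying this plan out in full is the design and verification of the clause gadget: one needs a small configuration in which the three ``true/false'' choices interact in exactly the $3$-SAT fashion, and one must rule out, using only Properties~\ref{prp:three-crossing-edges} and~\ref{prp:triangle-edges} and the rigidity lemmas, all unintended placements that might accidentally satisfy an unsatisfiable clause.
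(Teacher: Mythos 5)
Your overall architecture is the same as the paper's: a reduction from $3$-SAT in which rigid chains of augmented square antiprism graphs (via the extension theorems of Section~\ref{sec:racgraphs}) form variable gadgets whose two mirror embeddings encode the truth value, and clause gadgets whose geometric constraints force at least one true literal. However, there is a genuine gap: the clause gadget, which you yourself identify as ``the main obstacle,'' is exactly the part you do not construct, and it is the heart of the reduction. Your phrasing --- three literal-subgraphs ``compete for three regions of which only two can be simultaneously occupied'' --- is not yet a mechanism; as stated it is not even clear why occupying only two regions corresponds to at least one \emph{true} literal rather than at least one false one. The paper's concrete device is a ``valve'': the three clause endpoints are all adjacent to a single vertex trapped between two parallel edges, so by the right-angle condition only two of the three incident paths can escape perpendicularly (one upward, one downward), and the third endpoint must remain inside the strip. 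The asymmetry that makes this work is in the \emph{routing}: the trapped (right) endpoint's length-two path can only reach variable endpoints on the right side of its variable tower, which by the orientation convention is the true side, whereas the two escaping paths can reach either side. Without specifying such a directional routing constraint, your clause gadget does not encode the clause.

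A second, related omission is the global routing infrastructure. In the paper each variable gadget is not a single antiprism but a tower carrying $m$ literal endpoints per side, separated by \emph{corridors} of non-blocking parallel edges through which the clause-to-variable connection edges can pass perpendicularly; a skeleton with connector vertices and dummy variable gadgets keeps all towers parallel and correctly ordered so that every length-two path from a clause can actually reach its target endpoint without illegal crossings. Your plan of ``literal wires'' emanating from a single antiprism per variable does not address how $O(m)$ connections per variable coexist, nor how a connection edge crosses the intervening variable gadgets of other variables. These are not cosmetic details: the correctness of both directions of the reduction (in particular, that a satisfying assignment really yields a drawable configuration, and that no unintended placement rescues an unsatisfiable formula) rests entirely on these geometric mechanisms, so the proposal as written establishes the strategy but not the theorem.
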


\begin{proof}

We will reduce the well-known $3$-SAT problem \cite{GJ79} to the
straight-line RAC drawing problem. In a $3$-SAT instance, we are
given a formula $\phi$ in conjunctive normal form with variables
$x_1, x_2, \ldots, x_n$ and clauses $C_1, C_2, \ldots, C_m$, each
with three literals. We show how to construct a graph $G_\phi$ that
admits a straight-line RAC drawing $\Gamma(G_\phi)$ if and only if
formula $\phi$ is satisfiable.
%Our proof follows the approach of
%Formanm et al.\ \cite{FHHKLSWW93} (to prove that the angular
%resolution maximization problem is $\NP$-hard) using different
%gadgets to encode the variables and the clauses of the given
%formula.

Figure~\ref{fig:gadgets} illustrates the gadgets of our
construction. Each gray-colored square in these drawings corresponds
to an augmented square antiprism graph. Adjacent gray squares form
an extension (refer, for example, to the topmost gray squares of
Figure~\ref{fig:gadgets}a, which form a ``horizontal'' extension).
There also exist gray squares that are not adjacent, but connected
with edges. The legend in Figure~\ref{fig:gadgets} describes how the
connections are realized.

\begin{figure}[p]
   \centering
   \includegraphics[width=\textwidth]{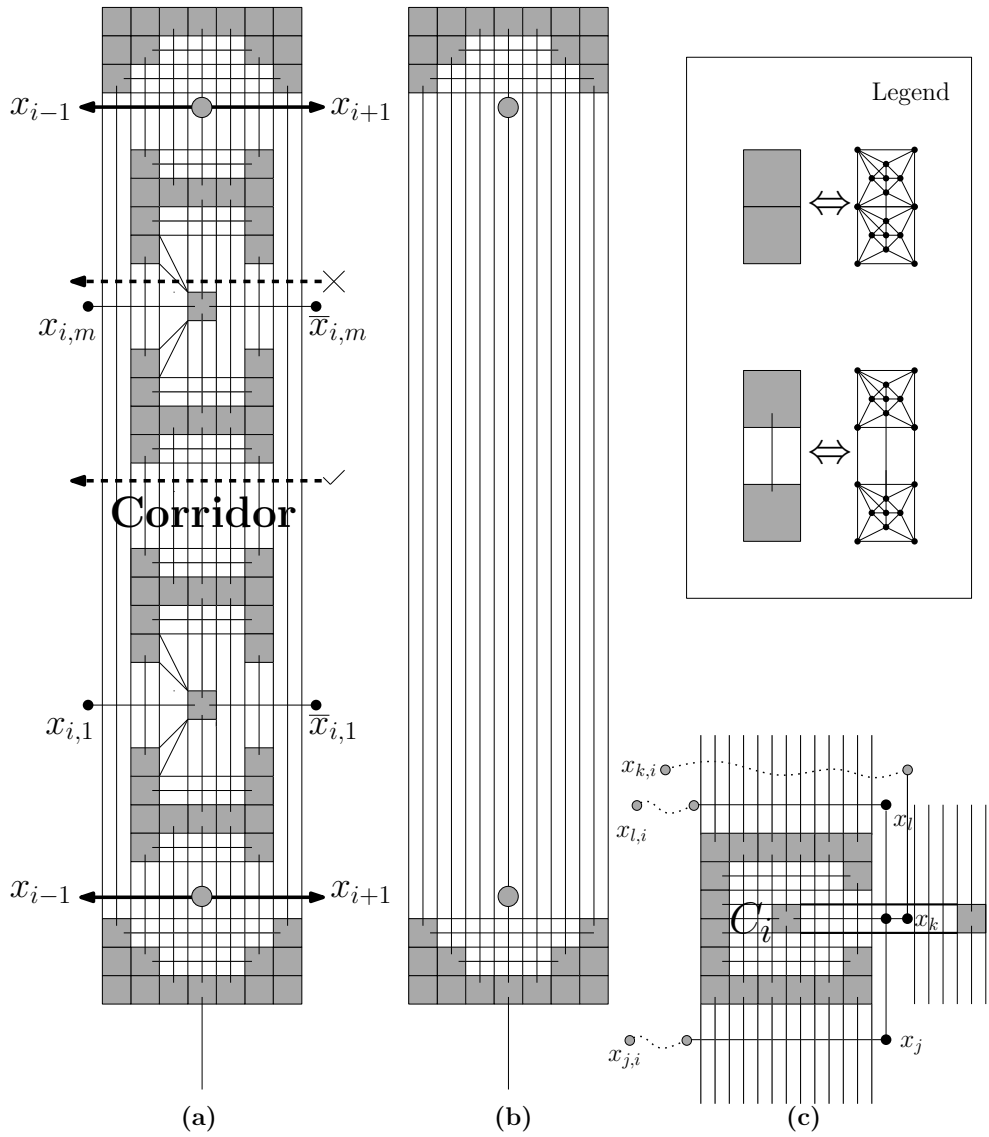}
   \caption{Gadgets of our construction: (a)~Variable gadget, (b)~Dummy variable gadget, (c)~Clause gadget}
   \label{fig:gadgets}
\end{figure}

The gadget that encodes variable $x_i$ of formula $\phi$ is given in
Figure~\ref{fig:gadgets}a. The gadget of variable $x_i$ consists of
a combination of augmented square antiprism graphs, and,
``horizontal'' and ``vertical'' edges, which form a tower, whose RAC
drawing has unique combinatorial embedding. One side of the tower
accommodates multiple vertices that correspond to literal $x_i$,
whereas its opposite side accommodates vertices that correspond to
literal $\overline{x_i}$ (refer to vertices $x_{i,1}, \ldots
,x_{i,m}$ and $\overline{x}_{i,1}, \ldots ,\overline{x}_{i,m}$ in
Figure~\ref{fig:gadgets}a). These vertices are called \emph{variable
endpoints}. Then, based on whether on the final drawing the negated
vertices will appear to the ``left'' or to the ``right'' side of the
tower, we will assign a true or a false value to variable $x_i$,
respectively. Pairs of consecutive endpoints $x_{i,j}$ and
$x_{i,j+1}$ are separated by a \emph{corridor} (see
Figure~\ref{fig:gadgets}a), which allows perpendicular edges to pass
through it (see the bottommost dashed arrow of
Figure~\ref{fig:gadgets}a). Note that this is not possible through a
``corridor'' formed on a variable endpoint, since there exist four
non-parallel edges that ``block'' any other edge passing through
them (see the topmost dashed arrow of Figure~\ref{fig:gadgets}a).
The corridors can have variable height. In the variable gadget of
variable $x_i$, there are also two vertices (they are drawn as gray
circles in Figure~\ref{fig:gadgets}a), which have degree four. These
vertices serve as ``\emph{connectors}'' among consecutive variable
gadgets, i.e., these vertices should be connected to their
corresponding vertices on the variable gadgets of variables
$x_{i-1}$ and $x_{i+1}$. Note that the connector vertices of the
variable gadgets associated with variables $x_1$ and $x_n$ are
connected to connectors of the variable gadgets that correspond to
variables $x_2$ and $x_{n-1}$, respectively, and to connectors of
\emph{dummy variable gadgets}.

Figure~\ref{fig:gadgets}b illustrates a dummy variable gadget, which
(similarly to the variable gadget) consists of a combination of
augmented square antiprism graphs, and, ``horizontal'' and
``vertical'' edges, which form a tower. Any RAC drawing of this
gadget has also unique combinatorial embedding. A dummy variable
gadget does not support vertices that correspond to literals.
However, it contains connector vertices (they are drawn as gray
circles in Figure~\ref{fig:gadgets}b). In our construction, we use
exactly two dummy variable gadgets. The connector vertices of each
dummy variable gadget should be connected to their corresponding
connector vertices on the variable gadgets associated with variables
$x_1$ and $x_n$, respectively.

The gadget that encodes the clauses of formula $\phi$ is illustrated
in Figure~\ref{fig:gadgets}c and resembles to a valve. Let $C_i=(x_j
\vee x_k \vee x_l)$ be a clause of $\phi$. As illustrated in
Figure~\ref{fig:gadgets}c, the gadget which corresponds to clause
$C_i$ contains three vertices\footnote{With slight abuse of
notation, the same term is used to denote variables of $\phi$ and
vertices of $G_\phi$.}, say $x_j$, $x_k$, and $x_l$, such that:
$x_j$ has to be connected to $x_{j,i}$, $x_k$ to $x_{k,i}$ and $x_l$
to $x_{l,i}$ \underline{by paths of length two}. These vertices,
referred to as the \emph{clause endpoints}, encode the literals of
each clause. Obviously, if a clause contains a negated literal, it
should be connected to the negated endpoint of the corresponding
variable gadget. The clause endpoints are incident to a vertex
``trapped'' within two parallel edges (refer to the bold drawn edges
of Figure~\ref{fig:gadgets}c). Therefore, in a RAC drawing of
$G_{\phi}$, only two of them can perpendicularly cross these edges,
one from top (\emph{top endpoint}) and one from bottom (\emph{bottom
endpoint}). The other one (\emph{right endpoint}) should remain in
the interior of the two parallel edges. The one that will remain
``trapped'' on the final drawing will correspond to the true literal
of this clause.

%With respect to the skeleton, each variable gadget has two possible
%placements, i.e., either the negated endpoints are to the left and
%the non-negated ones to the right, or vice versa (this is the
%flipping property). It is not possible some of the negated endpoints
%and some of the non-negated endpoints to lie on the same side of a
%variable gadget, since in this case several non right-angle
%crossings would occur at the corridors. This captures the choice
%between two values, true and false.

The gadgets, which correspond to variables and clauses of $\phi$,
are connected together by the skeleton of graph $G_{\phi}$, which is
depicted in Figure~\ref{fig:skeleton}a.  The skeleton consists of
two main parts, i.e., one ``horizontal'' and one ``vertical''. The
vertical part accommodates the clause gadgets (see
Figure~\ref{fig:skeleton}a). The horizontal part will be used in
order to ``plug'' the variable gadgets. The long edges that
perpendicularly cross (refer to the crossing edges slightly above
the horizontal part in Figure~\ref{fig:skeleton}a), imply that the
vertical part should be perpendicular to the horizontal part. The
horizontal part of the skeleton is separately illustrated in
Figure~\ref{fig:skeleton}b. Observe that it contains one set of
horizontal lines. %that are separated by a block of augmented square
%antiprism graphs. If the bottom set of horizontal lines is not
%present (see Figure~\ref{fig:skeleton}c), then the construction
%could ``turn'', as illustrated in Figure~\ref{fig:skeleton}d, and in
%this case, we would be able to proceed with our construction. The
%second set of horizontal lines, however, guarantees that the
%horizontal part of the construction will not ``turn'' (see
%Figure~\ref{fig:skeleton}e).

\begin{figure}[htb]
   \centering
   \includegraphics[width=\textwidth]{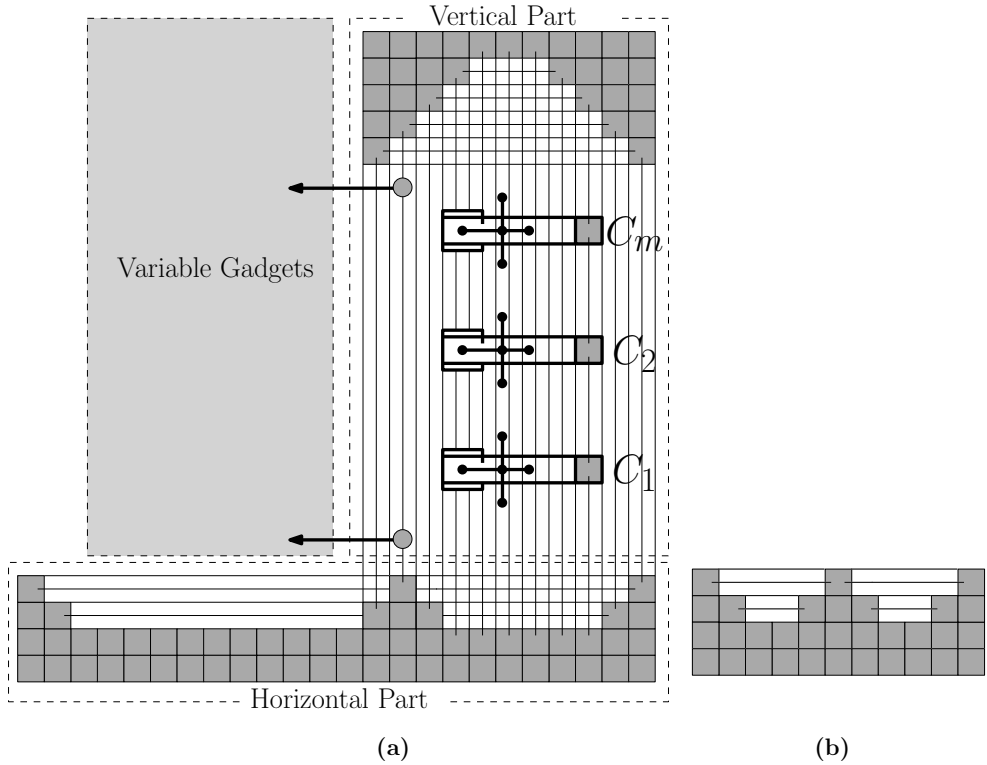}
   \caption{Illustration of the skeleton of the construction.}
   \label{fig:skeleton}
\end{figure}

Figure~\ref{fig:example} shows how the variable gadgets are attached
to the skeleton. More precisely, this is accomplished by a single
edge, which should perpendicularly cross the set of the horizontal
edges of the horizontal part. Therefore, each variable gadget is
perpendicularly attached to the skeleton, as in
Figure~\ref{fig:example}. Note that each variable gadget should be
drawn completely above of these horizontal edges, since otherwise
the connections among variable endpoints and clause endpoints would
not be feasible. The connector vertices of the dummy variable
gadgets, the variable gadgets and the vertical part of the
construction, ensure that the variable gadgets will be parallel to
each other (i.e., they are not allowed to bend) and parallel to the
vertical part of the construction.

\begin{figure}
   \centering
   \includegraphics[width=\textwidth]{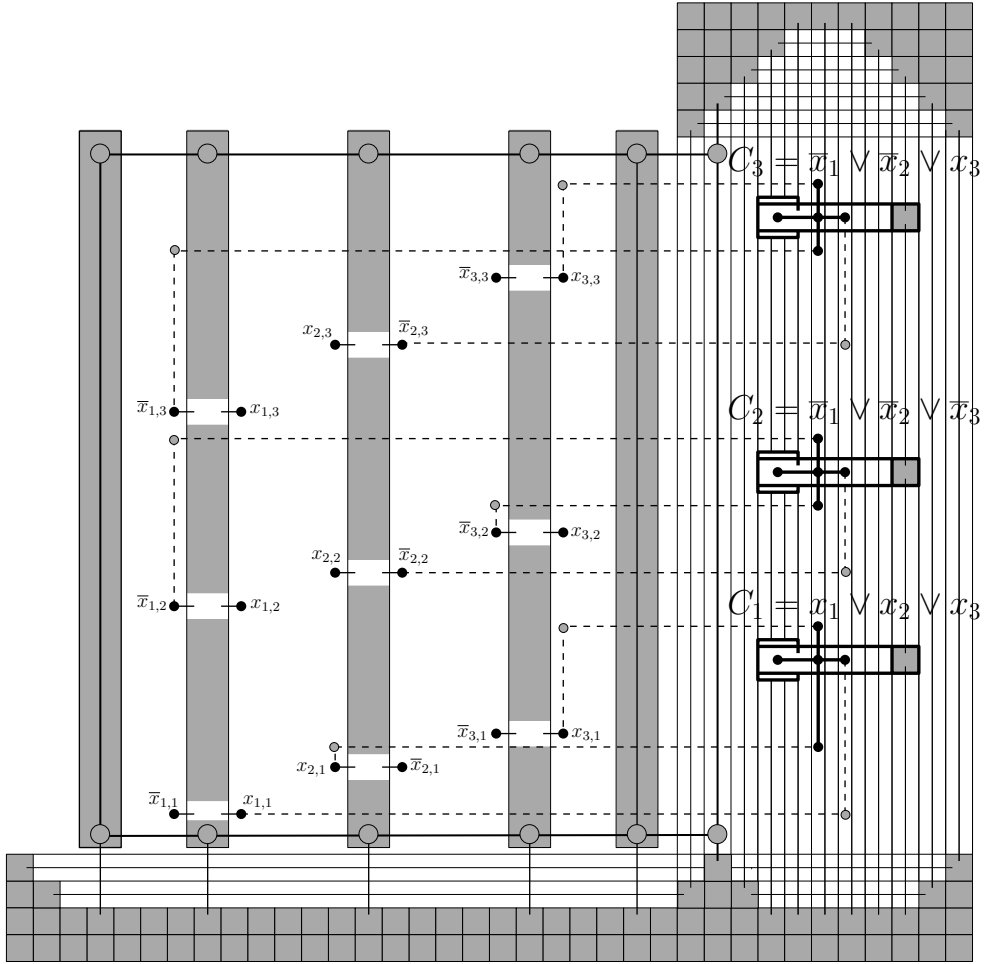}
   \caption{The reduction from $3$-SAT to the straight-line RAC drawing problem.
   The input formula is $\phi=(x_1 \vee x_2 \vee x_3)\wedge(\overline{x}_1 \vee
    \overline{x}_2 \vee \overline{x}_3)\wedge(\overline{x}_1 \vee
    \overline{x}_2 \vee x_3)$. The drawing corresponds to the truth
    assignment $x_1$=$x_3$=true, $x_2$=false.}
   \label{fig:example}
\end{figure}

We now proceed to investigate some properties of our construction.
Any path of length two that emanates from a top- or bottom-clause
endpoint can reach a variable endpoint either on the left or on the
right side of its associated variable gadget. The first edge of this
path should perpendicularly cross the vertical edges of the vertical
part of the construction and pass through some corridors\footnote{In
Figure~\ref{fig:example}, the corridors are the gray-colored regions
that reside at each variable gadget.}, whereas the second edge will
be used to realize the ``final'' connection with the variable gadget
endpoint (see Figure~\ref{fig:example}). However, the same doesn't
hold for the paths that emanate from a right-clause endpoint. These
paths can only reach variable endpoints on the right side of their
associated variable gadgets. More precisely, the first edge of the
$2$-length path should cross one of the two parallel edges (refer to
the bold drawn edges of Figure~\ref{fig:gadgets}c) that ``trap'' it,
whereas the other one should be used to reach (passing through
variable corridors) its variable endpoint (see
Figure~\ref{fig:example}).

Our construction ensures that up to translations, rotations and
stretchings any RAC drawing of $G_\phi$ resembles the one of
Figure~\ref{fig:skeleton}. It is clear that the construction can be
 completed in $O(nm)$ time. Assume now that there is a RAC drawing
$\Gamma(G_\phi)$ of $G_\phi$. If the negated vertices of the
variable gadget that corresponds to $x_i$, $i=1,2,\ldots,n$, lie to
the ``left'' side in $\Gamma(G_\phi)$, then variable $x_i$ is set to
true, otherwise $x_i$ is set to false. We argue that this assignment
satisfies $\phi$. To realize this, observe that there exist three
paths that emanate from each clause gadget. The one that emanates
from the right endpoint of each clause gadget can never reach a
false value. Therefore, each clause of $\phi$ must contain at least
one true literal, which implies that $\phi$ is satisfiable.

Conversely, suppose that there is a truth assignment that satisfies
$\phi$. We proceed to construct a RAC drawing $\Gamma(G_\phi)$ of
$G_\phi$, as follows: In the case where, in the truth assignment,
variable $x_i$, $i=1,2,\ldots,n$ is set to true, we place the
negated vertices of the variable gadget that corresponds to $x_i$,
to its left side in $\Gamma(G_\phi)$, otherwise to its right side.
Since each clause of $\phi$ contains at least one true literal, we
choose this as the right endpoint of its corresponding clause
gadget. As mentioned above, it is always feasible to be connected to
its variable gadgets by paths of length two. This completes our
proof. \qed
\end{proof}

% ============================================================================
\section{Conclusions}
\label{sec:conclusions}
% ============================================================================

In this paper, we proved that it is $\NP$-hard to decide whether a
graph admits a straight-line RAC drawing. Didimo et al.\
\cite{DEL09} proved that it is always feasible to construct a RAC
drawing of a given graph with at most three bends per edge. If we
permit two bends per edge, does the problem remain $\NP$-hard? It is
also interesting to continue the study on the interplay between the
number of edges and the required area in order to fill the gaps
between the known upper and lower bounds.

\bibliographystyle{abbrv}

\end{document}